\let\newfloat\newfloat@ltx
\def\HC{\mathcal{H}}
\def\ad{^{\dagger}}
\newcommand{\fsnull}[1]{}
\newcommand{\old}[1]{}
\tikzset{every picture/.style=remember picture}
\newcommand{\inprod}[1]{\left\langle #1 \right\rangle}
\newcommand{\poly}{\operatorname{poly}}
\newcommand{\CSIM}{{\rm CSIM}}
\newcommand{\NCSIM}{\overline{{\rm CSIM}}}
\newcommand{\QESIM}{{\rm QESIM}}
\newcommand{\QSIM}{{\rm QSIM}}
\newcommand{\NBP}{\overline{{\rm BP}}}
\newcommand{\NQESIM}{\overline{{{\rm CSIM}_{\mathrm{QE}}}}}
\newcommand{\BC}{\mathcal{B}}
\newcommand{\CC}{\mathcal{C}}
\newcommand{\GC}{\mathcal{G}}
\newcommand{\OC}{\mathcal{O}}
\newcommand{\PC}{\mathcal{P}}
\newcommand{\SC}{\mathcal{S}}
\newcommand{\Var}{{\rm Var}}
\renewcommand{\geq}{\geqslant}
\renewcommand{\leq}{\leqslant}
\DeclareMathOperator*{\argmin}{arg\,min}
\renewcommand{\vec}[1]{\boldsymbol{#1}}  
\newcommand*{\id}{\openone}
\newcommand{\bs}{\textsf{BS}}
\newcommand{\liea}{\mathfrak{g}}
\newcommand{\thv}{\vec{\theta}}
\def\be{\begin{equation}}
\def\ee{\end{equation}}
\def\bs{\begin{split}}
\def\e{\end{split}}
\def\ba{\begin{eqnarray}}
\def\bea{\begin{eqnarray}}
\def\tea{\end{eqnarray}}
\def\ea{\end{eqnarray}}
\def\eea{\end{eqnarray}}
\newtheoremstyle{italicheader}
  {6pt}   
  {6pt}   
  {\normalfont}
  {}      
  {\itshape}
  {.}
  {0.5em}
  {}      
\theoremstyle{italicheader}
\newtheorem{theorem}{Theorem}
\newtheorem{claim}{Claim}
\def\be{\begin{equation}}
\def\te{\end{equation}}
\def\ee{\end{equation}}
\def\ba{\begin{eqnarray}}
\def\bea{\begin{eqnarray}}
\def\tea{\end{eqnarray}}
\def\ea{\end{eqnarray}}
\def\eea{\end{eqnarray}}
\begin{document}

\title{ Does provable absence of barren plateaus imply classical simulability?}

\author{M. Cerezo}\email{cerezo@lanl.gov}
\affiliation{Information Sciences, Los Alamos National Laboratory, Los Alamos, NM 87545, USA}
\affiliation{Quantum Science Center, Oak Ridge, TN 37931, USA}

\author{Martin Larocca}
\affiliation{Theoretical Division, Los Alamos National Laboratory, Los Alamos, NM 87545, USA}
\affiliation{Center for Nonlinear Studies, Los Alamos National Laboratory, Los Alamos, New Mexico 87545, USA}
 
\author{Diego Garc\'ia-Mart\'in}
\affiliation{Information Sciences, Los Alamos National Laboratory, Los Alamos, NM 87545, USA}

\author{N. L. Diaz}
\affiliation{Information Sciences, Los Alamos National Laboratory, Los Alamos, NM 87545, USA}
\affiliation{Departamento de F\'isica-IFLP/CONICET,
		Universidad Nacional de La Plata, C.C. 67, La Plata 1900, Argentina}

\author{Paolo Braccia}
\affiliation{Theoretical Division, Los Alamos National Laboratory, Los Alamos, NM 87545, USA}

\author{Enrico Fontana}
\affiliation{University of Strathclyde, Glasgow G1 1XQ, UK}

\author{Manuel S. Rudolph}
\affiliation{Institute of Physics, Ecole Polytechnique Fédérale de Lausanne (EPFL),  Lausanne CH-1015, Switzerland}

\author{Pablo Bermejo}
\affiliation{Donostia International Physics Center, Paseo Manuel de Lardizabal 4, San Sebasti\'an E-20018, Spain}
\affiliation{Information Sciences, Los Alamos National Laboratory, Los Alamos, NM 87545, USA}

\author{Aroosa Ijaz}
\affiliation{Theoretical Division, Los Alamos National Laboratory, Los Alamos, NM 87545, USA}
\affiliation{Department of Physics and Astronomy, University of Waterloo, Ontario, N2L 3G1, Canada}
\affiliation{Vector Institute, MaRS Centre, Toronto, Ontario, M5G 1M1, Canada}

\author{Supanut Thanasilp}
\affiliation{Institute of Physics, Ecole Polytechnique Fédérale de Lausanne (EPFL),  Lausanne CH-1015, Switzerland}
\affiliation{Chula Intelligent and Complex Systems, Department of Physics, Faculty of Science, Chulalongkorn University, Bangkok, Thailand, 10330}

\author{Eric R. Anschuetz}
\affiliation{Institute for Quantum Information and Matter, Caltech, Pasadena 91125, USA}
\affiliation{Walter Burke Institute for Theoretical Physics, Caltech, Pasadena 91125, USA}

\author{Zo\"e Holmes}
\affiliation{Institute of Physics, Ecole Polytechnique Fédérale de Lausanne (EPFL),  Lausanne CH-1015, Switzerland}

\begin{abstract}

A large amount of effort has recently been put into understanding the barren plateau phenomenon.  In this perspective article, we face the increasingly loud elephant in the room and ask a question that has been hinted at by many but not explicitly addressed: \textit{Can the structure that allows one to avoid barren plateaus also be leveraged to efficiently simulate the loss classically?}  We collect evidence—on a case-by-case basis—that many commonly used models whose loss landscapes avoid barren plateaus can also admit classical simulation, provided that one can collect some classical data from quantum devices during an initial data acquisition phase. This follows from the observation that barren plateaus result from  a curse of dimensionality, and that current approaches for solving them end up encoding the problem into some small, classically simulable, subspaces.
Thus, while stressing that quantum computers can be essential for collecting data, our analysis sheds doubt on the information processing capabilities of many parametrized quantum circuits with provably barren plateau-free landscapes.  We end by discussing the (many) caveats in our arguments including the limitations of average case arguments, the role of smart initializations, models that fall outside our assumptions, the potential for provably superpolynomial advantages and the possibility that, once larger devices become available, parametrized quantum circuits could heuristically outperform our analytic expectations.

\end{abstract}

\maketitle

\section{Introduction}

In recent years, the initial excitement attracted by variational quantum algorithms~\cite{cerezo2020variationalreview,bharti2021noisy,endo2021hybrid} and quantum machine learning~\cite{wiebe2014quantumdeep,schuld2015introduction,biamonte2017quantum,cerezo2022challenges,  di2023quantum, abbas2023quantum} has been tempered by the barren plateau phenomenon~\cite{mcclean2018barren,larocca2024review,marrero2020entanglement,sharma2020trainability,patti2020entanglement,pesah2020absence,uvarov2020barren,cerezo2020impact,uvarov2020variational,wang2020noise,abbas2020power,arrasmith2021equivalence,larocca2021diagnosing,holmes2021connecting,
cerezo2020cost,khatri2019quantum,zhao2021analyzing,liu2021presence,miao2023isometric,letcher2023tight,basheer2022alternating,suzuki2023effect,rudolph2023trainability,kieferova2021quantum,thanaslip2021subtleties,lee2021towards,shaydulin2021importance,holmes2021barren,leadbeater2021f,zhang2022escaping,martin2022barren,grimsley2022adapt,leone2022practical,sack2022avoiding,kashif2023impact,friedrich2023quantum,garcia2023deep,kulshrestha2022beinit,volkoff2021efficient,kashif2023unified,monbroussou2023trainability,cherrat2023quantum,fontana2023theadjoint,ragone2023unified,diaz2023showcasing,park2023hamiltonian,sannia2023engineered,thanasilp2022exponential,west2023provably,mao2023barren}. Namely, there is a growing awareness that a large class of quantum learning architectures exhibit loss function landscapes that concentrate exponentially in system size towards their mean value. On such landscapes, exponential resources are required for training, prohibiting the successful scaling of variational quantum algorithms. Hence, identifying architectures and training strategies that provably do not lead to barren plateaus has become a highly active area of research. Examples of such strategies include shallow circuits with local measurements~\cite{cerezo2020cost,pesah2020absence,khatri2019quantum,zhao2021analyzing,liu2021presence,miao2023isometric}, dynamics with small Lie algebras~\cite{larocca2021diagnosing,monbroussou2023trainability,cherrat2023quantum,fontana2023theadjoint,ragone2023unified,diaz2023showcasing,west2023provably,heredge2024prospects}, identity initializations~\cite{zhang2022escaping,park2023hamiltonian, wang2023trainability}, embedding symmetries into the circuit's architecture~\cite{larocca2022group,meyer2022exploiting,skolik2022equivariant,ragone2022representation,nguyen2022atheory,schatzki2022theoretical,zheng2021speeding,east2023all}, adding non-unital noise or intermediate measurements~\cite{sannia2023engineered,mele2024noise,fefferman2023effect, crognaletti2024estimates,deshpande2024dynamic}, and certain classes of quantum generative models~\cite{kieferova2021quantum,rudolph2023trainability,letcher2023tight}.

However, these strategies all, in some sense, make use of some simple underlying structure of the problem. This provokes the question: \textit{Could the very same  structure that allows one to provably avoid barren plateaus be leveraged to efficiently simulate the loss function classically?} Here we argue that the answer to this question is ``Yes and No''. Specifically, we argue and present strong evidence that a wide class of loss landscapes which provably do not exhibit barren plateaus can be simulated using either a classical algorithm or what we call a ``quantum-enhanced'' classical algorithm that runs in polynomial time. In the latter case, this simulation still necessitates the use of a quantum computer during an initial data acquisition phase~\cite{huang2021power,elben2022randomized,huang2021information,gyurik2023exponential}, but it does not require hybrid quantum-classical optimization loops. These arguments can be understood as  a soft form of dequantization of the information processing capabilities of many variational quantum circuits in barren plateau-free landscapes.  

Core to our argument is the observation that any loss based on evolving an initial state $\rho$ through a parameterized quantum circuit $U(\vec{\theta} )$ and then estimating the expectation value of an non-trivial observable $O$ can be written as an inner product between the Heisenberg evolved observable $U (\vec{\theta} )^\dagger O U (\vec{\theta} )$ and the state $\rho$. Given that both of these objects live in the exponentially large vector space of operators, one can generally expect this overlap to be --on average over $\thv$-- exponentially small. This is the essence of the barren plateaus phenomenon---the curse of dimensionality. If, however, the evolved observable is confined to a polynomially large subspace, then the loss becomes the inner product between two objects in this reduced space and can therefore avoid barren plateaus. But, in this case one can also simulate the loss by representing the initial state, circuit, and measurement operator as polynomially large objects contained in, and acting on, the small subspace.

Our general argument is supported by an analysis of widely used schemes through which we show that all considered methods for avoiding barren plateaus can be efficiently classically simulated, as well as by recent works which explicitly perform the classical simulations~\cite{bermejo2024quantum,angrisani2024classically,lerch2024efficient,anschuetz2024arbitrary,mele2024noise,shin2024dequantising,ermakov2024unified,miller2025simulation}.  Fundamentally, it is the very proof of absence of barren plateaus that allows us to identify the polynomially-sized subspaces in which the relevant part of the computation lives. Using this information, we can then determine the set of expectation values one needs to estimate  (either classically or quantumly) to enable classical simulations. In the latter case the quantum computer is still essential, but is accessed non-adaptively to generate a \textit{classical surrogate} rather than via a hybrid optimization loop.

Given the potential for misunderstanding, let us first state a few caveats to our claims. 
Firstly, our argument applies to widely used models and algorithms that employ a loss function formulated as the expectation of an observable for a state evolved under a parametrized quantum circuit, as well as variants using measurements of this form followed by classical post-processing. This encompasses the majority of popular quantum architectures including most standard variational quantum algorithms, many quantum machine learning models and certain families of quantum generative schemes. However, it does not cover all possible quantum learning protocols. 

Secondly, while for all our case studies it is possible to identify the ingredients necessary for simulation, we do not prove that this will always be possible. Thus, in principle, there could be models for which the landscape is free of barren plateaus, and yet we do not know how to simulate it. This could arise for sub-regions of a landscape which could be explored via smart initialization strategies, when the small subspace is otherwise unknown, or even when the problem lives in the full exponential space but is highly structured. Indeed, we provide an explicit (but highly contrived) construction for the latter and raise the possibility that more such cases might be found heuristically once larger quantum devices become available.

Finally, having identified these caveats, we present new opportunities and research directions that follow from our results. 
In particular, we discuss the potential offered by warm starts and the fact that even if polynomial-time classical simulation is available, the computational cost might still be too large, thus enabling potential polynomial advantages when running the variational quantum computing scheme on a quantum computer. More exotically, we suggest that by exploiting the structure of conventional fault-tolerant quantum algorithms, it might yet be possible to construct highly structured variational architectures for which superpolynomial quantum advantages can be realized.

\section{Definitions for  barren plateaus and simulability}

Variational quantum computing algorithms encode a problem of interest into an optimization task. The standard approach is to train a parametrized quantum circuit to minimize a loss function that quantifies the quality of the solution~\cite{cerezo2020variationalreview,bharti2021noisy,endo2021hybrid,wiebe2014quantumdeep,schuld2015introduction,biamonte2017quantum,cerezo2022challenges}. These algorithms are hybrid computational models in the sense that they use quantum hardware to obtain an estimate of the loss and then leverage the power of classical optimizers to determine parameter updates for the next set of experiments. 

In what follows, we will assume that the loss function takes the form 
\begin{equation}\label{eq:loss}
    \ell_{\thv}(\rho,O)=\Tr[U(\thv)\rho U\ad(\thv)O]\,.
\end{equation}
Here,  $\rho\in\BC$ is an $n$-qubit input state belonging to the set of bounded operators $\BC$ acting on a $2^n$-dimensional Hilbert space $\HC$, $O\in\BC$ is some non-trivial Hermitian operator (with $\norm{O}_\infty \leq 1$), $U(\thv)$ is the parametrized quantum circuit, and $\thv$ is a set of trainable parameters. Losses of this form can be used to tackle a wide range of problems through different choices of $\rho$, $O$ and $U(\thv)$. We note that while algorithms can employ more general loss functions that require computing multiple such quantities (e.g., by sending different states through the circuit, or by estimating the expectation value of several operators), we will focus on the fundamental case where the loss is given by Eq.~\eqref{eq:loss}, as the lessons derived here can be extrapolated to other scenarios.

\begin{figure*}[t]
    \centering
\includegraphics[width=1\linewidth]{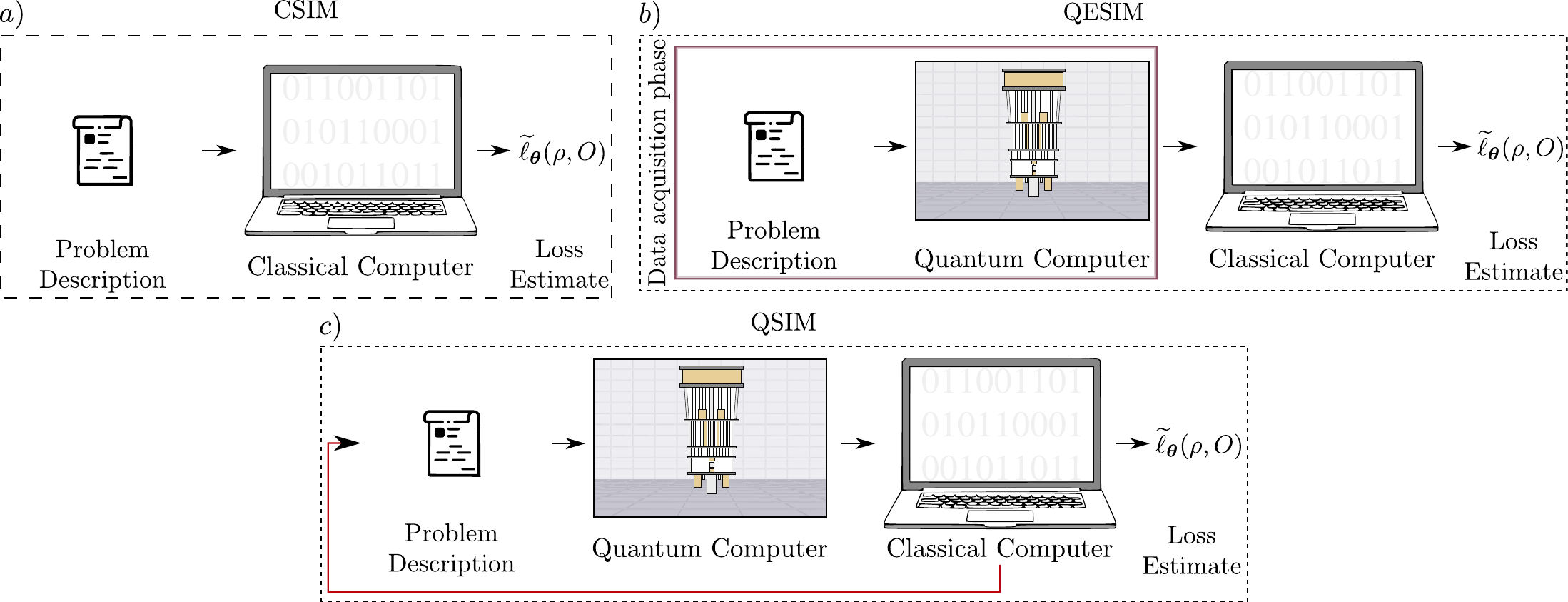}
    \caption{\textbf{Schematic description of the simulation classes.} a) Problems in $\CSIM$ are those for which there exists a fully classical algorithm that takes as input the problem description and estimates the loss function in polynomial time with a  classical computer. Here, access to a quantum computer is not needed. b) Problems in  $\QESIM$ 
 are those where one is allowed access to a quantum computer for an initial data acquisition phase that takes no more than polynomial time. At the end of this phase, access to the quantum computer is no longer allowed. A classical algorithm then takes the problem's description, and the data obtained from the quantum device, to estimate the loss function in polynomial time. Note that one can compute problems in $\QESIM$ without needing to run a parametrized quantum circuit on the quantum hardware. c) Problems in $\QSIM$ are those where one allows `on demand' access to a quantum computer. Here, one usually implements the parametrized quantum circuit on the device.}
    \label{fig:models}
\end{figure*}

To better understand and classify the problems we are focusing on, we find it convenient to define \textit{problems} specified by \textit{classes} $\mathcal{C} = \{\mathcal{I}\}$ of \textit{problem instances}.
A problem instance $\mathcal{I}$ is determined by an efficiently-sampleable parameter distribution $\mathcal{P}$, and some efficient \textit{classical description} of $\rho$, $U(\thv)$, and $O$ that can be used to estimate $\ell_{\thv}(\rho,O)$ on a quantum computer in polynomial time. These could be a quantum circuit that prepares $\rho$ from some fiducial state, a dictionary of the gate types and placements in $U(\thv)$, and the Pauli decomposition of $O$. We assume that these descriptions can be encoded in a string of size in $\OC(\poly(n))$.

For concreteness, let us consider an example of a problem class. Let  $\mathcal{C}_{{\rm shallowHEA}}$ be the class of all instances where the circuit is a one-dimensional hardware efficient ansatz~\cite{kandala2017hardware,cerezo2020cost} composed of $L\in\OC(\log(n))$ layers of two-qubit gates acting on neighboring qubits in a brick-layered structure. Moreover, we take  $\rho$ to be an $n$-qubit state preparable by a circuit with $\OC(\poly(n))$ gates when acting on the all-zero state, and $O$ some Pauli operator that is diagonal in the computational basis (e.g., $O=Z^{\otimes n}$ or $O=Z_{\mu}$ for some  $\mu \in \{1,\ldots,n\}$). Here, we further assume that all the gates in the circuit are parametrized, and every parameter is sampled uniformly at random.

Over the past few years, there has been a tremendous amount of work put forward to understand if the loss functions in a given  problem class based on variational circuits are \textit{trainable} (and we refer the reader to Ref.~\cite{gil2024relation} for a subtle discussion of what trainable, or even variational, even means). Several sources of untrainability have been detected~\cite{cerezo2022challenges} such as the presence of sub-optimal local minima~\cite{bittel2021training,fontana2022nontrivial,anschuetz2022beyond,anschuetz2021critical,larocca2021diagnosing} and expressivity limitations~\cite{tikku2022circuit}. However, the vast majority of trainability analysis has been concentrated around the \textit{barren plateau} phenomenon~\cite{larocca2024review,marrero2020entanglement,sharma2020trainability,patti2020entanglement,pesah2020absence,uvarov2020barren,cerezo2020impact,uvarov2020variational,wang2020noise,abbas2020power,arrasmith2021equivalence,larocca2021diagnosing,holmes2021connecting,
cerezo2020cost,khatri2019quantum,zhao2021analyzing,liu2021presence,miao2023isometric,rudolph2023trainability,letcher2023tight,basheer2022alternating,suzuki2023effect,kieferova2021quantum,thanaslip2021subtleties,lee2021towards,shaydulin2021importance,holmes2021barren,leadbeater2021f,zhang2022escaping,martin2022barren,grimsley2022adapt,leone2022practical,sack2022avoiding,kashif2023impact,friedrich2023quantum,garcia2023deep,kulshrestha2022beinit,volkoff2021efficient,kashif2023unified,monbroussou2023trainability,cherrat2023quantum,fontana2023theadjoint,ragone2023unified,diaz2023showcasing,park2023hamiltonian,sannia2023engineered,thanasilp2022exponential,west2023provably,mao2023barren} (see Ref.~\cite{larocca2024review} for a review on barren plateaus). 
When a problem exhibits a barren plateau, its loss function becomes---on average---exponentially concentrated with the system size~\cite{mcclean2018barren,cerezo2020cost}.
We say a class $\mathcal{C}$ of problems is  \textit{provably barren plateau-free}, i.e., $\mathcal{C} \in \NBP$, if one can show that 
\begin{equation}
    \Var_{\thv\sim\mathcal{P}}[ \ell_{\thv}(\rho,O) ] \in\Omega\left( \frac{1}{\mathrm{poly}(n)}\right),
\end{equation}
for all loss functions $\ell_{\thv}(\rho,O)$ in the class $\mathcal{C}$ of parameterized quantum circuits. 
Note that one can define a more general class of barren plateau-free problems where an explicit proof of absence of barren plateaus is not needed, but for now we will consider this restricted class.

Proving that certain types of problems are in $\NBP$ has recently become an active area of research. While such studies are extremely important, it is worth noting that just because the loss functions for a given problem are barren plateau-free does not mean that they are practically useful or that they can achieve a quantum advantage. For instance, one should wonder if the quantum computer is being employed in a meaningful way. That is, one would ideally like to show that beyond absence of barren plateaus, there exists no classical algorithm that can also efficiently compute the loss. 

To better tackle this question, we then first need to define what it means to \textit{compute} a loss function, and also what we understand by \textit{classical simulability}. In particular, since the notion of barren plateaus is an average statement over the landscape, in this context the natural notions for computing and simulating the loss will also be average ones. However, stronger notions of simulability, such as one where one guarantees that the loss can be computed for all points on the landscape, will also be discussed below.

First, let us define the task of \textit{computing a loss function}, such as that in Eq.~\eqref{eq:loss}, for a given problem. We will say that an algorithm can compute the loss functions in an instance $\mathcal{I}$ if,   with high probability $\thv\sim\PC$, it can implement a function $\tilde{\ell}_{\thv}(\rho,O)$ approximating the loss up to error  $\epsilon$, i.e., 
\begin{equation}
     \left( \tilde{\ell}_{\thv}(\rho,O) - \ell_{\thv}(\rho,O) \right)^2 \leq \epsilon\,.
\end{equation}
An algorithm can compute the loss functions in the problem~$\mathcal{C}$ if it can compute them for all instances $\mathcal{I}$ in $\mathcal{C}$. One could also define problems in terms of being able to compute the loss function $\ell_{\thv}(\rho,O)$ and its gradients $\nabla \ell_{\thv}(\rho,O)$.
Being able to access the derivative information can be useful during the parameter training process.
While in some cases computing the loss allows us to access derivatives (e.g., via parameter-shift rule~\cite{mitarai2018quantum,schuld2019evaluating}), we will restrict ourselves to only estimating the loss function. 

As previously mentioned, the above definition guarantees that we can compute the loss function with high probability when the parameters are sampled according to $\PC$. While simulating--with high probability--random point in the landscape is typically easy (see for instance~\cite{angrisani2024classically}) it may not be particularly useful as it does not necessarily allow us to train the circuit. One might also be interested in computing the loss given \textit{any} parameter settings. Thus, we will also consider a stronger version where an algorithm can compute the loss function in an  instance $\mathcal{I}$ if  \textit{for all}   $\thv$, it can implement a function $\tilde{\ell}_{\thv}(\rho,O)$ approximating the loss up to error  $\epsilon$. Of course, more pragmatic definitions exist where we do not care about approximating the loss function in average, or across all points, but where we instead focus on actually solving the optimization task at hand. As such, one may wish to obtain a $\tilde{\ell}_{\thv}(\rho,O)$ such that the solution to the optimization problem $\argmin_{\thv}\tilde{\ell}_{\thv}(\rho,O)$ is as good (according to some metric) as that obtained by solving $\argmin_{\thv}\ell_{\thv}(\rho,O)$. We will not focus on this case here but we note that its exploration is an open area of research~\cite{gil-fuster2024understanding,bermejo2024quantum}.

Now that we have defined what it means to compute (with high probability\ or with certainty over $\bm{\theta}$) a loss, we can present different notions of what it means to simulate it. 
We begin with the most basic and intuitive definition for classical simulability, which we simply dub \textit{Classical Simulation} (CSIM).
A problem $\mathcal{C}$ is in $\CSIM$ if a polynomial-time classical algorithm can compute every instance in $\mathcal{C}$.
As schematically shown in Fig.~\ref{fig:models}, CSIM is performed entirely on a classical device with no need nor access to a quantum computer.

We also consider classical simulation algorithms enhanced by polynomial-size data obtained from quantum experiments, which we denote as \emph{Quantum Enhanced Classical Simulation} ($\QESIM$), which morally means that the quantum landscape can be \textit{classically surrogated}. 
In this case, one is given a problem instance and is allowed to use a quantum computer for an initial data acquisition phase, which takes no more than polynomial time (and uses no more than polynomial-memory).
During this phase, one can prepare copies of the initial quantum state, apply some operations, and independently measure them via some efficient tomographic or classical shadow techniques \cite{huang2020predicting,elben2022randomized,anshu2023survey}. Such a procedure can be used to obtain an efficiently storable classical representation (i.e., storable in string of size in $\OC(\poly(n))$) of the state, unitary or measurement operator.
Once this initial phase is over, one cannot access the quantum device anymore, and the computation of $\ell_{\thv}(\rho,O)$  has to be done purely classically.
A problem $\mathcal{C}$ is in $\QESIM$ if a polynomial-time classical algorithm, which can utilize data obtained from quantum devices in an initial data acquisition phase, can compute every instance in $\mathcal{C}$. Here we note that the problem of estimating a loss with data from a quantum computer can be cast into a decision problem. In this case, $\CSIM$ can be related to BPP, $\QESIM$ is closely connected to  BPP/Samp~\cite{huang2021power} and even more closely connected to BPP/qgenpoly~\cite{jerbi2023shadows}, and $\QSIM$ is related to BQP.  Speaking more informally, algorithms in $\QESIM$ are also often described as classical surrogate methods.

We stress that all problems in $\CSIM$ and $\QESIM$ do not require running a parametrized quantum circuit on the quantum computer. Quantum resources are either entirely unnecessary ($\CSIM$), or used only in some initial data acquisition phase ($\QESIM$).

\begin{figure}[t]
    \centering
\includegraphics[width=.7\columnwidth]{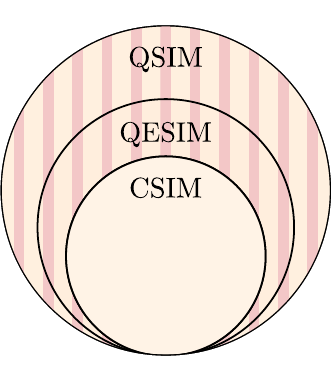}
    \caption{\textbf{Class inclusions.} We show the inclusions between the classes $\CSIM$, $\QESIM$ and $\QSIM$. The region outside of $\CSIM$ (highlighted with stripes) corresponds to problems where a quantum advantage could potentially be achieved.  }
    \label{fig:inclusions}
\end{figure}

This leads us to define \textit{Quantum Simulation} (QSIM).
A problem class $\mathcal{C}$ is in $\QSIM$ if a polynomial-time quantum algorithm can compute all instances in $\mathcal{C}$.  As depicted in Fig.~\ref{fig:models}, models in QSIM  allow for feedback between the classical and quantum computer. Moreover, the models in QSIM will usually require implementing the parametrized quantum circuit on quantum hardware.

This is a convenient point to make a few important remarks. First, as shown in Fig.~\ref{fig:inclusions}, we note that, by definition, the following inclusions hold:
\begin{equation}
    \CSIM\subseteq\QESIM\subseteq\QSIM\,.
\end{equation}
For instance, $\CSIM \subseteq \QESIM$ because if the loss is fully classically simulable then it can be estimated by simply skipping the quantum computer.
The rest of the inclusions follow similarly. Clearly, QSIM is \textit{not} the largest possible set, and any loss that requires exponential time to estimate, even with a quantum computer, is beyond QSIM. 

A quantum advantage is possible for problems where the loss can be simulated only if we have access to a quantum computer, i.e., for any problem in $\QSIM \cap \NCSIM\,$. In fact, any problem where the loss is in $\QESIM$ but not $\CSIM$ is already capable of a quantum advantage as it requires a quantum device. The problems that fall in $\QESIM \cap \NCSIM$ may well be the most suited for implementation on a near-term quantum computer as the data acquisition phase could be less noisy than fully implementing the parametrized quantum circuit~\cite{mcclean2017hybrid, parrish2019quantum, bharti2020quantum, huang2021power, jerbi2023shadows, gyurik2023limitations}.

With the definitions above, we are ready to ask the main question that motivates this work: \textit{Are all provably barren plateau-free loss functions   also classically simulable (given polynomial-size data)?}
That is, if:
\begin{equation}
    \NBP \overset{?}{\subset} \CSIM \,\, \mathrm{or} \,\, \NBP\overset{?}{\subset} \QESIM\,.
\end{equation}

\section{What leads to absence of barren plateaus?}

To understand whether barren plateau-free losses are simulable, one must first understand the conditions leading to non-exponential concentration. While the study of barren plateaus was initially limited to case-by-case analyses, recent results have transformed our understanding of this phenomenon~\cite{fontana2023theadjoint,ragone2023unified,diaz2023showcasing}. As such, what previously seemed like a fragmented patchwork of special cases has started to coalesce into a cohesive unified theory of barren plateaus. In what follows, we will attempt to give an intuition for the sources of barren plateaus, and concomitantly how to avoid them. 

To begin, let us note the simple, yet extremely important, fact that the loss function can be re-written as 
\begin{equation}\label{eq:inner-product}
    \ell_{\thv}(\rho,O)=\inprod{\rho(\thv),O}=\inprod{\rho,O(\thv)}\,,
\end{equation}
where we have defined $\rho(\thv)=U(\thv)\rho U\ad(\thv)$, $O(\thv)=U\ad(\thv)OU(\thv)$, and where  $\inprod{A,B} = \Tr [A^\dagger B]$ denotes the  Hilbert--Schmidt inner product~\cite{knapp2013lie}. At a first glance,  Eq.~\eqref{eq:inner-product} indicates that the loss is expressed as the inner product---a similarity measure---between two (exponentially large) operators of $\BC$. This fact should already raise some red flags as one can expect that, under quite general assumptions, the inner product between two exponentially large objects will be (on average) exponentially small and concentrated. For instance, we refer the reader to~\cite{ragone2023unified,larocca2024review} for a formalization of such argument. As such,  problems with loss functions such as those in Eq.~\eqref{eq:inner-product} can generically be expected to have barren plateaus.  

\begin{figure}[t]
    \centering
\includegraphics[width=1\linewidth]{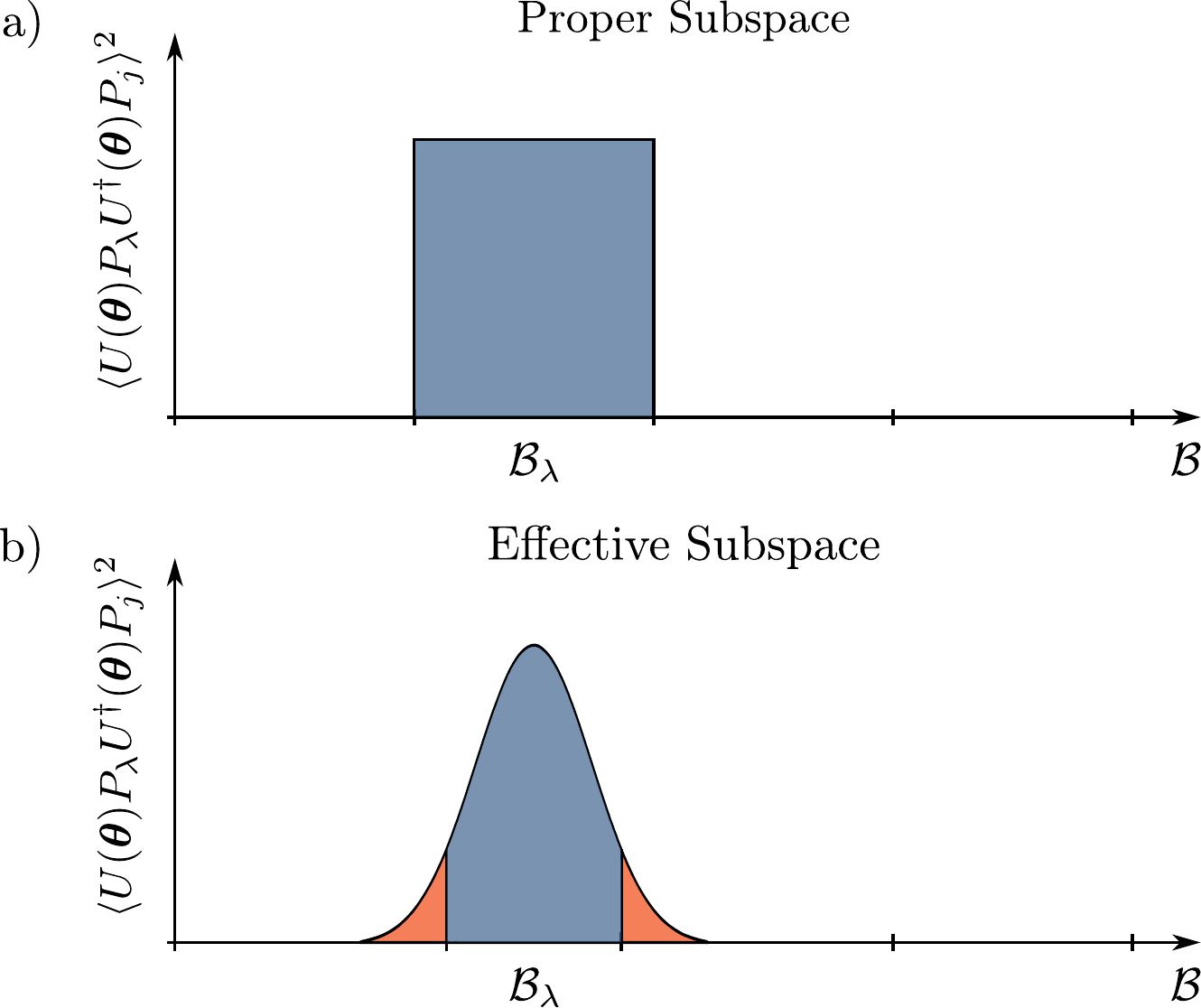}
\caption{\textbf{Adjoint action subspaces.} a) Given an operator $P_\lambda$ and some parameters $\thv$, we say $\BC_\lambda\subset\BC$ is a proper subspace if  $
\inprod{U(\thv)P_\lambda U\ad(\thv),P_j}^2$ is non-zero only for operators $P_j$ in $\BC_\lambda$.  b) We will instead say that $\BC_\lambda$ is an effective subspace if $\inprod{U(\thv)P_\lambda U\ad(\thv),P_j}^2$ is non-zero for many operators outside of $\BC_\lambda$, but is only large for operators in $\BC_\lambda$.
Proper and effective subspaces can arise either for all $\thv$, or with high probability when sampling $\thv$ from $\PC$.
}
    \label{fig:modules}
\end{figure}

If, however, the unitary $U(\thv)$ possesses additional structure, such as respecting some symmetry, then the loss can inherit this structure and potentially avoid the aforementioned issues. 
In particular, let us consider the adjoint action of $U(\thv)$ over the operator space $\BC$, and let us analyze if it leads to either subspaces, or, effective subspaces. That is, given some  $\thv$ and operator $P$ in an appropriate orthogonal basis of $\BC$ (such as a Pauli operator), we want to know: \textit{Where can $U(\thv)PU\ad(\thv)$ go?} Note that one can, and should, also ask the question: \textit{Where can $U\ad(\thv)PU(\thv)$ go?} But for simplicity of notation we will consider the first question.

Mathematically, this question can be answered by computing the inner products $
\inprod{U(\thv)PU\ad(\thv),P_j}$ for  the rest of the basis elements $P_j$. For instance, if the inner product is non-zero for all $P_j$, then we know that the unitary can transform $P$ into an operator that can spread out and reach operators across all of $\BC$. On the other hand, it could happen that the adjoint action of the unitary can only reach certain operators in $\BC$. For instance, as shown in Fig.~\ref{fig:modules}(a),  the adjoint action of $U(\thv)$ can lead to  well-defined subspaces in  $\BC$ such that $
\inprod{U(\thv)PU\ad(\thv),P_j}$ is only non-zero for operators in some $\BC_P \subset \BC$. This case can arise for instance in circuits with small Lie algebraic modules~\cite{larocca2021diagnosing,schatzki2022theoretical,kazi2024analyzing,anschuetz2022efficient,fontana2023theadjoint,ragone2023unified,diaz2023showcasing,heredge2024prospects} or in shallow-depth hardware efficient ans\"{a}tze~\cite{cerezo2020cost,khatri2019quantum,zhao2021analyzing,liu2021presence}.

A second case of interest arises when we have that 
$\inprod{U(\thv)PU\ad(\thv),P_j}$ is non-zero for a wide range of operators, but is large only for operators in a given subspace  $\BC_P$. This case is depicted schematically in Fig.~\ref{fig:modules}(b) and we will say that given $P$, the adjoint action of $U(\thv)$ leads to `effective' subspaces, rather than proper ones. 
We note that effective subspaces can either arise for all $\thv$ or with high probability for $\thv \sim \mathcal{P}$. In the latter case, for most $\theta$ values for the $\inprod{U(\thv)PU\ad(\thv),P_j}$ are large only within a small subspace, but for some low probability $\thv$ values, large overlaps outside this subspace could be observed. Effective subspaces appear, with high probability for  $\thv\sim\PC$, for quantum convolutional neural networks~\cite{cong2019quantum,pesah2020absence}, in circuits with small angle initialization~\cite{zhang2022escaping,park2023hamiltonian, wang2023trainability}, or even due to the effects of noise~\cite{mele2024noise}. 

To study whether or not a given problem lives within a subspace, it is convenient to express  $O$ in an orthogonal basis of $\BC$ as 
\begin{equation}\label{eq:O-decomp}
O=\sum_\lambda c_\lambda P_\lambda\,.   
\end{equation} 
Then,  denote as $\BC_\lambda\equiv\BC_{P_\lambda}$ the subspace associated to each $P_\lambda$ that is induced by the adjoint action of $U(\thv)$.  Given a state $\rho\in\BC$, we define its projection onto $\BC_\lambda$ as $\rho_\lambda=\sum_{P_j\in\BC_\lambda}a_j P_j$ with $a_j=\inprod{\rho,P_j}/\sqrt{\inprod{P_j,P_j}}$.
In this notation, the loss  function becomes:
\begin{align}
    \ell_{\thv}(\rho,O)&=\sum_\lambda c_\lambda \inprod{\rho,P_\lambda(\thv)}=\sum_\lambda c_\lambda \inprod{\rho_\lambda,P_\lambda(\thv)}\,,\label{eq:loss-subspaces}
\end{align}
which reveals that $\ell_{\thv}(\rho,O)$ is the sum of the inner products in each subspace. Note that in the previous equation we expanded the measurement operator and expressed the loss in terms of the subspaces obtained by Heisenberg evolving each basis element. However, one could also expand $\rho$ and study the loss in terms of the subspaces that concomitantly arise.

\begin{figure}[t]
    \centering
\includegraphics[width=.7\linewidth]{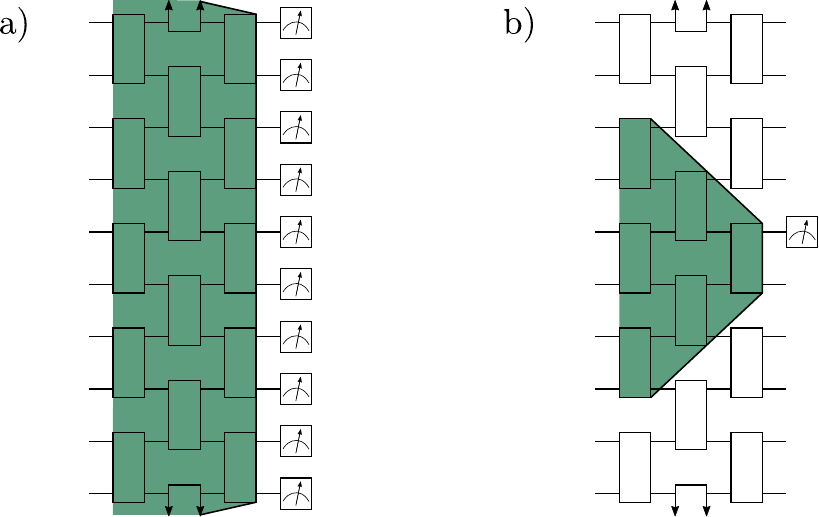}
\caption{\textbf{Subspaces for shallow hardware efficient ans\"{a}tze.} We consider classes of problems where the unitary $U(\thv)$ is an $L$-layered hardware efficient ansatz with two-qubit gates acting on alternating pairs of neighboring qubits in a brick-like fashion. We further assume that $L\in\OC(\log(n))$. a) For a global operator such as $O=Z^{\otimes n}$, the subspace obtained by adjoint action of $U(\thv)$ is exponentially large $\forall L$. b) Given a local operator, such as $O=Z_\mu$, one can see that  for all $\thv$ the ensuing subspace is proper and only contains Pauli operators acting on at most $\OC(\log(n))$ neighboring qubits. Hence, this subspace is only polynomially large. Colored regions depict the backwards light cone of the measurement operator when Heisenberg evolved~\cite{leone2022practical}.  }
    \label{fig:HEA}
\end{figure}

If any of the subspaces appearing in Eq.~\eqref{eq:loss-subspaces} is only polynomially large, then we can see that some component of the loss arises from  comparing  objects (via their Hilbert-Schmidt inner product)  in non-exponentially large spaces. In what follows,  we will use $\CC_{{\rm polySub}}$ to denote the class of problems where the action of  $U(\thv)$ on some term of the measurement operator is (either with high probability for  $\thv\sim\PC$ or for all $\thv$) contained in an identifiable polynomially scaling (proper or effective) subspace, i.e., such that $\text{dim}(\BC_\lambda)\in\OC( \text{poly}(n)) $, and where a basis for $\BC_\lambda$ can be classically obtained.

To make these ideas clearer, let us go back to the class of problems with shallow hardware efficient ans\"{a}tze $\mathcal{C}_{{\rm shallowHEA}}$. Consider first the case where the measurement operator is global, meaning that it acts on all qubits, such as $O=Z^{\otimes n}$. As schematically shown in Fig.~\ref{fig:HEA}(a), by applying the circuit to this operator one can obtain exponentially many other Pauli operators acting non-trivially on all qubits, meaning that the associated subspace is exponentially large. On the other hand, as seen in Fig.~\ref{fig:HEA}(b), if the measurement is a local operator such as $O=Z_\mu$, then  due to the bounded light cone structure of the circuit, it can only be mapped to Paulis acting on at most $2L\in\OC(\log(n))$ neighboring qubits. Since there are $\OC(\poly(n))$ such operators, the resulting subspace is a proper polynomial-sized subspace for any $\thv$. Hence, if $O$ contains any local term, we will have that $\mathcal{C}_{{\rm shallowHEA}}\subset \CC_{{\rm polySub}}$.

As mentioned above, showing that certain problem classes are in  $\CC_{{\rm polySub}}$ indicates that some part of the loss arises from comparing objects in polynomially large spaces. While this appears to be  a necessary step towards non-concentrated loss functions, it is also clearly \textit{not} a sufficient one.  In particular,  consider a  subspace $\BC_\lambda$ (for some $P_\lambda $ in Eq.~\eqref{eq:O-decomp}) such that $\dim(\BC_\lambda)\in\OC(\poly(n))$, but  $\rho$ or $O$ have almost no component when projected down into $\BC_\lambda$. For instance, if   $|\inprod{\rho_\lambda,\rho_\lambda}|$ or $|c_\lambda^2\inprod{P_\lambda,P_\lambda}|$  are in $\OC(1/\exp(n))$, then we will  clearly have  an issue since the signal in the loss coming from the polynomially-sized subspace can therefore still be exponentially weak. If the previous occurs for all small subspaces, then the loss will be obtained by comparing objects in exponentially large spaces, which we already know can lead to concentration issues.

\begin{table*}[t]
    \centering
\begin{tabular}{|c|c|c|c|}\hline
Problem instance $\CC$ based  on &Refs.&Operators in  the polynomial-sized $\BC_\lambda$&$O$ and $\rho$ leading to  $\CC\subset\NBP$ \\\hline\hline
Shallow hardware efficient ansatz   & \cite{cerezo2020cost,khatri2019quantum,zhao2021analyzing,liu2021presence,letcher2023tight,basheer2022alternating,suzuki2023effect} &  $\OC(\log(n))$ neighboring qubits (P) & Local $O$, area law  $\rho$ \\\hline
Generic shallow locally circuits    & \cite{napp2022quantifying, zhang2023absence}\footnote{See also the Supplemental Information for a detail proof of this claim.}  &  $\OC(1)$-weight qubits (E) & Local $O$, area law  $\rho$ \\\hline
Quantum convolutional neural network  & \cite{pesah2020absence} &  $\OC(1)$-weight  qubits (E) & Local $O$, area law $\rho$\\\hline
$U(1)$-equivariant   & \cite{larocca2021diagnosing,monbroussou2023trainability,cherrat2023quantum} &  Proj. with  $\OC(1)$ Hamming weight (P) & Equivariant $O$,  $\rho\in\BC_\lambda$ \\\hline
$S_n$-equivariant   & \cite{schatzki2022theoretical} & Permutation equivariant (P) & $O\in\BC_\lambda$, $\Tr[\rho_\lambda^2]\in\Omega(1/\poly(n))$\\\hline
Matchgate circuit   & \cite{diaz2023showcasing} & Product of $\OC(1)$ Majoranas (P) & $O\in\BC_\lambda$, $\Tr[\rho_\lambda^2]\in\Omega(1/\poly(n))$\\\hline
Small angle initialization    & \cite{zhang2022escaping,park2023hamiltonian, wang2023trainability}  & $\OC(\text{poly}(n))$ Pauli Operators (E)&Local $O$, area law $\rho$\\\hline
Small Lie algebra $\mathfrak{g}$ & \cite{fontana2023theadjoint,ragone2022representation,west2023provably} & Operators in $\mathfrak{g}$ (P) & $O\in i\mathfrak{g}$, $\Tr[\rho_{\mathfrak{g}}^2]\in\Omega(1/\poly(n))$\\\hline
Non-unital noisy-circuits  & \cite{mele2024noise} &  $\OC(\log(n))$   qubits (E) & Local $O$, any $\rho$ \\\hline
Dynamic circuits & \cite{deshpande2024dynamic} & $\OC(1)$-weight qubits (E) & Local $O$, any $\rho$\\\hline
Quantum generative modeling\footnote{Here we refer to quantum circuit Born machines with the maximum mean discrepancy loss~\cite{rudolph2023trainability} and quantum generative adversarial networks  with certain families of discriminators ~\cite{letcher2023tight} (both using shallow hardware efficient circuits).} & \cite{rudolph2023trainability, letcher2023tight} & Tensor networks (e.g., MPS) (P) & $O$, $\rho$ computational basis proj.\\\hline
\end{tabular}
    \caption{\textbf{Problem classes in $\NBP$ and also in $\CC_{{\rm polySub}}$.} In this table, we present the problems considered in our analysis. In all cases, we also provide the references where absence of barren plateaus was proven, as well as the relevant polynomially large subspace $\BC_\lambda$ (indicating whether it is proper (P) or effective (E)), and the states and measurement operators for which $\CC$ is in $\NBP$. We refer the reader to Table~\ref{tab:sim-algs} and the Supplemental Information for additional details on how the classical simulation is performed. Here `proj.' is used as a short-hand for projector, while MPS stands for matrix product states~\cite{verstraete2008matrix}. }
    \label{tab:train_models}
\end{table*}

To showcase the importance of having measurement operators and initial states which are ``well-aligned''  with the polynomial subspace, let us consider again the class of problems with shallow hardware efficient ans\"{a}tze. We already saw that if $O$ is local (or contains relevant local terms), then $\mathcal{C}_{{\rm shallowHEA}}\subset \CC_{{\rm polySub}}$. Thus, to know whether the loss will concentrate, we need to study  the Hilbert-Schmidt norm of the projected $\rho$ into the subsets of $2L\in\OC(\log(n))$ adjacent qubits in the backwards light cone of the local measurement. For instance, it is not hard to check that if $\rho$ is pure and follows a volume law of entanglement~\cite{cerezo2020cost,leone2022practical}, then $|\inprod{\rho_\lambda,\rho_\lambda}|$ will be exponentially close to zero, and the loss will be exponentially concentrated. On the other hand, if $\rho$ is pure and satisfies an area law of entanglement~\cite{eisert2010colloquium,leone2022practical}, then the previous inner product will be, at most, polynomially vanishing. Putting these realizations together we see that 
\begin{equation}
    \CC_{{\rm shallowHEA}}\subset \NBP\,,\,\, \text{if $\rho$ follows area law and $O$ is local}.\nonumber
\end{equation}
Note that these results actually correspond to a reinterpretation of Theorems 1 and 2 in Ref.~\cite{cerezo2020cost}, where it was shown that shallow-depth hardware efficient ans\"{a}tze lead to barren plateaus for global measurements, but are barren plateau-free if the measurement operator is local and the initial state follows an area law of entanglement.

The previous example of $\mathcal{C}_{{\rm shallowHEA}}$ illustrates an important point. We started with a problem class, identified the measurement operators which, when evolved under the adjoint of $U(\thv)$, remain in polynomially small subspaces, and then used this to determine the states for which the problem does not exhibit a barren plateau. This argument then leads to the question: \textit{How general is the connection between absence of barren plateaus and polynomially small subspaces arising from the unitary's adjoint actions?}  To address this, we have performed a detailed analysis of widely used barren plateau-free models and found that in all cases, it is \textit{precisely}  the existence of such subspaces which allows us to avoid exponential concentration. As such, we believe that the following claim is true for all widely used architectures and techniques:

\begin{claim}\label{claim1}
    \textit{[Standard provably barren plateau-free architectures live in classically identifiable polynomial subspaces.]} For all standard problem  classes $\mathcal{C}$ such that  $\mathcal{C}  \subset \NBP$, we have that
    \begin{equation}
         \mathcal{C} \subset \CC_{{\rm polySub}} \, .
    \end{equation} 
    That is, take a problem class which provably avoids barren plateaus. Then, if one studies the parametrized unitary's adjoint action on the measurement operation and initial state, one will find that it generates operators that live either exactly, or approximately, in a polynomially sized subspace in operator space. Moreover, such subspaces can be identified classically. 
\end{claim}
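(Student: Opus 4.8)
The plan is to treat Claim~\ref{claim1} not as a statement admitting a single closed-form derivation --- the quantifier over \emph{all} standard classes $\CC$ forbids that --- but as a universality statement to be established by combining one general scaling argument with a case-by-case reading of the existing $\NBP$ proofs. Concretely, I would first prove a necessary-condition lemma: for the ``standard'' classes (those whose parameter distribution $\PC$ induces approximate $2$-design averaging on the reachable operator subspaces, as in the unified barren-plateau frameworks), membership in $\NBP$ forces the adjoint orbit of some term of $O$ to be confined to a polynomially-sized effective subspace. I would then verify, row by row in Table~\ref{tab:train_models}, that the subspace furnished by this lemma is the very object that each published proof of absence of barren plateaus already constructs, and that a basis for it is classically obtainable.

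For the general step, I would start from the subspace-resolved form of the loss, Eq.~\eqref{eq:loss-subspaces}, and track a single term $\inprod{\rho_\lambda,P_\lambda(\thv)}$. Writing $\BC_\lambda$ for the subspace reachable from $P_\lambda$ under the adjoint action and $\rho_\lambda$ for the projection of $\rho$ onto it, the averaging assumption on $\PC$ gives the scaling estimate
\begin{equation}
\Var_{\thv\sim\PC}\!\left[\inprod{\rho,P_\lambda(\thv)}\right]\;\approx\;\frac{\inprod{\rho_\lambda,\rho_\lambda}}{\dim(\BC_\lambda)}\,,
\end{equation}
to leading order in the dimension. Since $\inprod{\rho_\lambda,\rho_\lambda}\leq\inprod{\rho,\rho}\leq 1$, the only way for a single term --- and hence, after summing the $c_\lambda$ contributions and controlling cross-terms, for the total loss --- to achieve $\Var_{\thv\sim\PC}[\ell_{\thv}(\rho,O)]\in\Omega(1/\poly(n))$ is for at least one $\BC_\lambda$ to satisfy $\dim(\BC_\lambda)\in\OC(\poly(n))$ while carrying non-negligible weight $\inprod{\rho_\lambda,\rho_\lambda}\in\Omega(1/\poly(n))$. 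This is exactly the defining condition of $\CC_{{\rm polySub}}$, so on these classes $\NBP$ entails the existence of the required polynomial (proper or effective) subspace.

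For the case-by-case step, the key structural observation I would exploit is that every existing proof of $\CC\subset\NBP$ in Table~\ref{tab:train_models} lower-bounds the variance precisely by inverting the dimension of an algebraically natural object: the backward light cone for shallow and convolutional ans\"{a}tze, the Lie algebra or its module $\liea$ for dynamical-Lie-algebra and equivariant circuits, the fixed-Hamming-weight sector for $U(1)$ symmetry, the permutation module for $S_n$ symmetry, the span of low-degree Majorana monomials for matchgates, and the bond space for tensor-network generative models. In each case I would (i) identify $\BC_\lambda$ with that object, (ii) check $\dim(\BC_\lambda)\in\OC(\poly(n))$ directly, and (iii) exhibit a classical construction of a basis (e.g., enumerating Paulis inside the light cone, generating $\liea$ by nested commutators, or listing fixed-weight computational-basis projectors).

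The hard part will be twofold, and it is what keeps the statement a claim rather than a theorem. First, the scaling estimate above rests on design-like behavior of $\PC$ that does not hold verbatim for every architecture (small-angle initializations and convolutional networks only yield \emph{effective}, probabilistic subspaces), so the variance--dimension correspondence must be re-derived per case rather than invoked uniformly. Second, and more fundamentally, ``polynomial effective dimension'' and ``classically identifiable basis'' are logically independent: a loss could concentrate weakly yet live in a subspace we cannot name, or be highly structured in the full exponential space, as the contrived construction promised in the caveats shows. Consequently no converse holds and the claim cannot be promoted to an iff; its real content is the empirical universality across standard classes together with the meta-observation that the proof of $\NBP$ itself hands us the subspace.
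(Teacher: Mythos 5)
Your proposal matches the paper's own approach: Claim~\ref{claim1} is supported there not by a single derivation but by the case-by-case analysis in Tables~\ref{tab:train_models}--\ref{tab:sim-algs} and the Appendix, where each published proof of absence of barren plateaus is reverse-engineered to exhibit the polynomial (proper or effective) subspace --- exactly your step (i)--(iii) program, down to the same identifications (backward light cones, Lie algebras and their modules, fixed-Hamming-weight sectors, low-degree Majorana monomials, MPS bond spaces). Your additional ``necessary-condition lemma'' is a formalization the paper deliberately does not attempt --- it asserts the variance--dimension link only case by case and explicitly disclaims $\NBP \subset \CC_{{\rm polySub}}$ in general (cf.\ the Boolean-function counterexample with discrete parameters, which is non-concentrated yet explores an exponentially large structured space) --- but since you yourself demote that lemma to a per-architecture heuristic and rest the claim on the empirical universality across standard classes, your argument reduces to the paper's.
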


By `standard problems' we here refer to the conventional variational quantum architectures which have been proven to be in $\NBP$. 
In Table~\ref{tab:train_models} we present a non-exhaustive list of such architectures. Therein, we report the relevant polynomial subspace, indicating whether it is proper or effective, as well as the conditions on the measurement operator and initial state necessary for absence of barren plateaus. We emphasize that in many cases, it was the proof of absence of barren plateau itself which allowed us to determine the reported information. The majority of strategies studied, including shallow hardware efficient ansatz and highly symmetrized models, lead to proper subspaces for all $\thv$. The most important examples of models where one lives in effective subspaces (with high probability $\thv\sim\PC$) are quantum convolutional neural networks and small angle initialization strategies.

\section{Connection between absence of barren plateaus and simulability}

In the previous section we claimed that barren plateaus arise as a curse of dimensionality for  loss functions that compare objects in exponentially large spaces. We then argued that our attempts to fix this issue have ultimately led us to encode the problem in some polynomially small subspace, which we can classically identify. Here we show that the existence of such a subspace can be exploited to classically simulate the loss.

To illustrate our arguments, let us again start by considering a  shallow hardware efficient ansatz class where $O=Z_\mu$.  Since the measurement is local the adjoint action of $U(\thv)$ leads to a proper subspace $\forall \thv$ of  Pauli operators acting on at most $2L\in\OC(\log(n))$-neighboring qubits. From Fig.~\ref{fig:HEA}(b), we can then see that if we `drop' all the gates from $U(\thv)$ that are outside of the measurement's backwards light cone, the loss function remains unchanged.  By denoting as $U_\lambda(\thv)$ the reduction of  $U(\thv)$ that acts only on the  $2L$ qubits in the backwards light cone, we find that 
\begin{equation}
    \ell_{\thv}(\rho,Z_\mu)= \inprod{\rho,U\ad(\thv)Z_\mu U(\thv)}= \inprod{\rho_\lambda,U_\lambda\ad(\thv) Z_\mu U_\lambda(\thv)}\,\nonumber.
\end{equation}
In the last equality, one computes the inner product between objects living, and acting on, the polynomially-sized subspace. 

Indeed, we have taken an important step in the right direction as classically computing the loss now requires working with non-exponential operators. Still, we need to determine what $\rho_\lambda$ is.  If $\rho$ is some product state, then one can classically find its projection onto the subspace by computing (e.g., via pen and paper) the expectation value for all operators $P_j\in\BC_\lambda$. However, if $\rho$ is some state obtained at the output of a given circuit, then there is no generic strategy which allows us to obtain these expectation values classically (even if we are promised that $\rho$ is an area law state~\cite{anshu2023survey}). On the other hand, given access to a quantum computer, one can efficiently estimate $\rho_\lambda$  during the initial data acquisition phase. In particular, here one simply prepares the quantum state, and measures the expectation value for all (polynomially many) operators in a basis of $\BC_\lambda$. One could also simply perform standard classical shadow tomography~\cite{huang2020predicting,elben2022randomized}, and this information will suffice.

Hence, for local observables and area law initial states, we have that
\begin{equation}
   \CC_{{\rm shallowHEA}}\subset \CSIM\,, \quad \text{or}\quad  \CC_{{\rm shallowHEA}}\subset \QESIM\nonumber\,,
\end{equation}
and it is clear that the loss can be estimated without ever needing to use a quantum computer to run the parametrized quantum circuit. Hence,  no problem based on shallow hardware efficient ans\"{a}tze with a local measurement can be outside of $\QESIM$. We note that this result was originally reported in Ref.~\cite{basheer2022alternating}.It is worth additionally highlighting that, given the simplicity of the requisite measurement procedures here, one could lift the restriction that $\rho$ is preparable by a poly depth quantum circuit and also consider quantum states that result from quantum experiments running in polynomial time, e.g., from analog or non-unitary processes. This makes the quantum-enhanced classical simulation method more flexible than directly implementing the parameterized quantum circuit on hardware~\cite{jerbi2023power}.

Here we argue that the process described above for simulating shallow HEA circuits can be applied to any problem in $\CC_{{\rm polySub}}$. In all cases, the simulation follows the following three steps:
\begin{enumerate}
    \item Identify the subspaces $\BC_\lambda$ of polynomial dimension.
    \item Characterize the adjoint action of $U(\thv)$ on the basis element $P_\lambda$ in the decomposition of  $O$ (or  $\rho$) that lead to the $\BC_\lambda$. 
    \item Compute or measure the component of $\rho$ (or $O$) that is in the relevant polynomial  $\BC_\lambda$ subspaces.
\end{enumerate}

Identifying the polynomial subspace $\BC_\lambda$ is the most important step and usually requires understanding the internal properties of $U(\thv)$ and how these translate into the circuit's adjoint action. For instance, in the shallow hardware efficient ansatz example, we needed to realize that the relevant subspace is composed only of operators acting on the qubits in the backwards light cone. More generally, given a problem in $\NBP$, one can obtain $\BC_\lambda$ by carefully analyzing the proof techniques used to show non-exponential concentration and reverse engineering what the exploited structure is. In particular, all proofs for the absence of barren plateaus come with fine print, in the sense that they only hold for specific choices of $\rho$ and $O$. Thus, most of the work has already been performed in proving absence of barren plateaus and one can  take the initial states and measurement operators for which the proof holds and use these to infer $\BC_\lambda$. 

\begin{table*}[t]
    \centering
\begin{tabular}{|c|c|c|c|}\hline
Problem instance $\CC$ based  on & Tomographic procedure for $\rho$ & Simulation algorithm based on & Simulable \\\hline\hline
Shallow hardware efficient ansatz &  Pauli classical shadows~\cite{huang2020predicting}    & Light-cone sim. reduced $U(\thv)$ & $\forall \, \thv$ \\\hline
Generic shallow locally circuits   & Pauli classical shadows    & Pauli Propagation~\cite{angrisani2024classically} &  $\thv\sim\PC$  \\\hline
Quantum convolutional neural network\footnote{In the Supplemental Information we also show that certain parameter restricted quantum convolutional neural networks can be simulated for all $\thv$. For a general simulation of vanilla quantum convolutional neural networks  see~\cite{bermejo2024quantum}.} & Pauli classical shadows &  Pauli Propagation & $\thv\sim\PC$   \\\hline
$U(1)$-equivariant  & Computational basis measurement& Givens Rotations~\cite{kerenidis2021classical}   & $\forall \,  \thv$  \\\hline
$S_n$-equivariant     & Permutation invariant shadows~\cite{toth2010permutationally}& $\mathfrak{g}$-sim~\cite{goh2023lie}&$\forall \,  \thv$\\\hline
Matchgate circuit    & Expectation value of Pauli operators & $\mathfrak{g}$-sim &   $\forall \,  \thv$ \\\hline
Small angle initialization     & Pauli measurements & Tensor Networks~\cite{shin2024dequantising}, Pauli Prop.~\cite{lerch2024efficient} &  $\thv\sim\PC$  \\\hline
Small Lie algebra $\mathfrak{g}$  & Expectation value of algebra elements & $\mathfrak{g}$-sim & $\forall \,  \thv$ \\\hline
Non-unital noisy-circuits  & Not needed & Light-cone sim.~\cite{mele2024noise}, Pauli Prop.~\cite{schuster2024polynomial}  & $\thv\sim\PC$ \\\hline
Dynamic circuits & Pauli classical shadows & Pauli Propagation~\cite{angrisani2024classically} & $\thv\sim\PC$ \\\hline
Quantum generative modeling\footnote{Here we refer to quantum circuit Born machines with the maximum mean discrepancy loss~\cite{rudolph2023trainability} and quantum generative adversarial networks with certain families of discriminators ~\cite{letcher2023tight} (both using shallow hardware efficient circuits).}  & Not needed & Tensor Networks & $\forall \,  \thv$\\\hline
\end{tabular}
    \caption{\textbf{Simulation algorithm for the problems in $\NBP$.} In this table we present the problem instances considered in our analysis and an example algorithm that we can use to simulate them. Alternatives to tensor networks can also be used, e.g. Refs.~\cite{nemkov2023fourier,fontana2023classical, rudolph2023classical, shao2023simulating,beguvsic2023simulating}. We also provide the tomographic procedure that could be used for a generic initial state as well as indicating whether the simulation is available on average, or for all points in the landscape.  }
    \label{tab:sim-algs}
\end{table*}

Once we have identified $\BC_\lambda$, we can proceed to study the adjoint action of $U(\thv)$ over the relevant $P_\lambda$ operator. The key insight here is  to note that $U(\thv)P_\lambda U\ad(\thv)$ can only   map to operators in $\BC_\lambda$, and hence, can be somehow represented by its effective action on this subspace. The specific construction of this effective action is case dependent, but as detailed in the Supplemental Information (which contains additional details and proofs, which contains Refs.~\cite{wiersema2020exploring,jozsa2008matchgates,wan2022matchgate,de2013power,diaz2023parallel,bravyi2004lagrangian,dias2023classical,cudby2023gaussian,gigena2015entanglement,arrazola2022universal,johri2021nearest,lopez2022symmetric,somma2006efficient,zeier2011symmetry,wiersema2023classification,kokcu2024classification,aguilar2024full,vznidarivc2022solvable,deneris2024exact,braccia2024computing,belkin2023approximate,mittal2023local,benedetti2019generative,coyle2020born,alcazar2020classical,benedetti2019adversarial,perdomo2018opportunities,zoufal2021generative,ferris2012perfect,stoudenmire2010minimally,markov2008simulating,verstraete2006matrix,evenbly2011tensor,beguvsic2023fast,wecker2015progress,caro2021generalization,liu2023model,hur2021quantum,umeano2023can,mele2023introduction}.), we have been able to derive it for all considered barren plateau-free problems. In some cases, such as the aforementioned hardware efficient ansatz, one can trivially reduce $U(\thv)$ to a smaller dimensional unitary acting on $2L$ qubits by simply discarding all of the gates that are not in the light cone of the measurement  (see Fig.~\ref{fig:HEA}(b)). In some other cases, one needs to employ classical simulation algorithms based on tensor networks, operator truncation, or Lie algebraic techniques~\cite{anschuetz2022efficient,fontana2023classical,rudolph2023classical,goh2023lie}. We refer the reader to the Supplemental Information for additional details. It is worth noting that constructing the adjoint action of $U(\thv)$ on $\BC_\lambda$ has a computational cost associated with it, which one needs to take into consideration to estimate the simulation cost. Importantly, while we have found algorithms whose cost is polynomial in the number of qubits, it can nevertheless have poor scaling.

The final task is to obtain the component of $\rho$ and $O$ in $\BC_\lambda$, which we respectively denote as  $\rho_\lambda$ and $O_\lambda$. While it is possible to find the polynomially large subspaces and the effective adjoint of $U(\thv)$ fully classically, this will not be generically true  for determining  $O_\lambda$ and $\rho_\lambda$ (as we saw, for example, for the shallow HEA). 
While $O_\lambda$ is not much of a concern---since the classical description of $O$ could already contain this information---let us focus on obtaining $\rho_\lambda$. Recalling that, given an orthogonal basis $\{P_j\}_j$, one can write $\rho_\lambda=\sum_{P_j\in\BC_\lambda}\inprod{\rho,P_j}P_j/\sqrt{\inprod{P_j,P_j}}$,  then we see that one needs to compute the expectation values $\inprod{P_j,\rho}$. While for simple states such as the all-zero state this task can be performed fully classically, for general input states $\rho$ this might not be the case.

With the previous arguments, we state the following claim.

\begin{claim}\label{claim2}
     \textit{[Problems in known polynomial subspaces are classically simulable (potentially requiring data from a quantum computer).]} 
    \noindent Consider a problem class such that $\CC \subset \CC_{{\rm polySub}}$. 
     Then either  
    \begin{equation}
    (i.) \, \, \, \, \CC \subset \CSIM\,\quad \nonumber
    \end{equation}
    if $O_\lambda$, $\rho_\lambda$ can be obtained classically, or
    \begin{equation}
    (ii.) \, \, \, \, \CC \subset \QESIM\,\quad \nonumber
    \end{equation}
    if $O_\lambda$, $\rho_\lambda$ need to  be estimated
   on a quantum computer. \\

   \noindent In particular, this opens the door towards the simulation of barren plateau-free loss functions $\CC \subset\NBP$ living in a known polynomial subspace (as described in Claim~\ref{claim1}) using a classical algorithm without the need to implement a parametrized quantum circuit on a quantum computer.
\end{claim}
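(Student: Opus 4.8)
The plan is to prove the claim constructively, by exhibiting the explicit three-step simulation algorithm outlined above and verifying that it (a) returns an estimate $\tilde{\ell}_{\thv}(\rho,O)$ satisfying $(\tilde{\ell}_{\thv}(\rho,O)-\ell_{\thv}(\rho,O))^2\leq\epsilon$ with high probability, and (b) runs in time $\OC(\poly(n))$, using only classical resources in case $(i.)$ and a $\poly(n)$-time initial quantum data-acquisition phase in case $(ii.)$. The natural starting point is Eq.~\eqref{eq:loss-subspaces}, read as the statement that the loss decomposes as $\ell_{\thv}(\rho,O)=\sum_\lambda c_\lambda\inprod{\rho_\lambda,P_\lambda(\thv)}$, a sum of inner products each confined to a single subspace $\BC_\lambda$, where $P_\lambda(\thv)=U\ad(\thv)P_\lambda U(\thv)$. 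Since $\CC\subset\CC_{{\rm polySub}}$ guarantees that the relevant $\BC_\lambda$ have $\dim(\BC_\lambda)\in\OC(\poly(n))$ and admit a classically computable basis $\{P_j\}$, and since the poly-size description of $O$ ensures there are only $\OC(\poly(n))$ terms $\lambda$, the whole loss can be assembled from polynomially many inner products between polynomially-sized vectors.

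First I would fix, for each $\lambda$, a basis of $\BC_\lambda$ and expand both $\rho_\lambda$ and $P_\lambda(\thv)$ in it. The adjoint action $P_\lambda\mapsto P_\lambda(\thv)$, which by the defining property of the subspace maps $\BC_\lambda$ into itself, is then represented by a $\dim(\BC_\lambda)\times\dim(\BC_\lambda)$ matrix, so computing the coefficient vector of $P_\lambda(\thv)$ is a matrix--vector operation of polynomial size. The concrete construction of this effective adjoint map is case dependent---ranging from simply discarding the gates outside a measurement light cone, to tensor-network contraction, Givens-rotation updates, or $\mathfrak{g}$-sim---and for each architecture I would invoke the corresponding algorithm of Table~\ref{tab:sim-algs}, checking that its cost is $\poly(n)$. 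The remaining ingredient is the coefficient vector of $\rho_\lambda$, whose entries are $\inprod{\rho,P_j}/\sqrt{\inprod{P_j,P_j}}$; the coefficients $c_\lambda$ specifying $O_\lambda$ come for free from the classical description of $O$. When $\rho$ has a simple description (e.g.\ a product or stabilizer state) the entries of $\rho_\lambda$ are evaluated analytically, yielding case $(i.)$; otherwise they are estimated during the data-acquisition phase by measuring the $\OC(\poly(n))$ observables $P_j$ on copies of $\rho$, for instance via classical shadows, yielding case $(ii.)$. Either way one obtains all the coefficients, from which $\tilde{\ell}_{\thv}(\rho,O)$ is assembled, never running the parametrized circuit on hardware.

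The final step is the error analysis, which I expect to contain the only genuine difficulties. The routine part is ordinary error propagation: one bounds the per-entry estimation error of the $\rho_\lambda$ coefficients, uses $\norm{O}_\infty\leq1$ together with $\sum_\lambda|c_\lambda|$ being controlled by the poly-size description of $O$, and invokes shadow-tomography sample-complexity bounds to certify that $\poly(n,1/\epsilon)$ samples suffice to reach $(\tilde{\ell}_{\thv}(\rho,O)-\ell_{\thv}(\rho,O))^2\leq\epsilon$ with high probability. The harder part concerns effective (rather than proper) subspaces, such as those arising for quantum convolutional networks and small-angle initializations, where the identity $\inprod{\rho,P_\lambda(\thv)}=\inprod{\rho_\lambda,P_\lambda(\thv)}$ holds only approximately, and only with high probability over $\thv\sim\PC$. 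There one must quantify the leakage of $P_\lambda(\thv)$ outside $\BC_\lambda$ and show that its contribution to the loss is suppressed except on a low-probability set of parameters; this is precisely where the proof of membership in $\NBP$ must be reused, since the same concentration estimates that establish a non-vanishing variance also bound the stray weight outside the subspace. Controlling this leakage---and thereby upgrading the exact decomposition valid for proper subspaces to a high-probability approximate one for effective subspaces---is the main obstacle, and is the step that in general confines the guarantee to holding on average over $\thv$ rather than at every point of the landscape.
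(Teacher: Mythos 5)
Your proposal follows essentially the same route as the paper: the decomposition of Eq.~\eqref{eq:loss-subspaces} into per-subspace inner products, the three-step recipe (identify $\BC_\lambda$, represent the adjoint action of $U(\thv)$ as a polynomially-sized effective map constructed case-by-case via the algorithms of Table~\ref{tab:sim-algs}, and obtain $\rho_\lambda$, $O_\lambda$ either analytically or via tomography in a data-acquisition phase), with the proper-versus-effective-subspace distinction accounting for the all-$\thv$ versus high-probability guarantees. Your added emphasis on the error propagation and on quantifying the leakage outside effective subspaces is a sensible elaboration of what the paper leaves informal, but it does not constitute a different argument.
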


Again, for all barren plateau-free problems considered here, we have determined  the relevant subspace, found efficient procedures to estimate the projections of the initial state and measurement operator, as well as for the action of the unitary. A summary of the  resulting simulation algorithm and measurement protocols are reported in Table~\ref{tab:sim-algs} and expanded upon in the Supplemental Information.

As mentioned above, and as indicated in Table~\ref{tab:sim-algs}, for problems that live in proper polynomial subspaces (which constitute the vast majority of barren plateau-free schemes considered in the literature), one can simulate the loss in the strong sense, meaning that there exists a  classical algorithm that approximates the loss for any $\thv$. In the case of problems that live in effective subspaces, the classical simulability is available with high probability over $\thv \sim \mathcal{P}$. In particular,  here we can simulate the parts of the loss that do not have barren plateaus. In the next  section we will discuss the caveats resulting from this slightly weaker claim, but also opportunities to mitigate its disadvantages.

\medskip 
To wrap up, we highlight that combining Claim~\ref{claim1} and Claim~\ref{claim2} with our case-by-case analysis we can  arrive at the conclusion: \textit{For the families of problems and architectures with provable absence of barren plateaus we analyzed, it does not seem that the parametrized quantum circuit need to be implemented on a quantum computer in order to estimate the loss in polynomial time}. 
Note that our results do \textit{not} imply a dequantization of variational quantum computing as a whole, since a quantum device might still be needed for the initial data acquisition phase. Viewed this way, our results can be viewed positively as highlighting the potential of a different learning paradigm where the quantum computer is used non-adaptively to create a classical surrogate of the loss landscape. Still, in a way, Claims~\ref{claim1} and~\ref{claim2} \textit{do}, in a sense, dequantize the variational part of the model and shed some serious doubts on the non-classicality of the information processing being done by a barren plateau-free loss function.

 \section{Caveats and future directions}\label{sec:futureop}

In this section, we present several caveats to our arguments, as well as interesting new research directions.

\subsection{Caveats}

First and foremost, we would like to highlight the fact that our general arguments are based on intuition gathered from a case-by-case study of widely used circuit architectures and techniques (although we do refer the reader to~\cite{bermejo2024quantum,angrisani2024classically,lerch2024efficient,anschuetz2024arbitrary,mele2024noise,shin2024dequantising,ermakov2024unified, schuster2024polynomial} for explicit simulation algorithms). It was not possible to analyze every work claiming absence of barren plateaus---we studied only those in Table~\ref{tab:train_models}---but we highly encourage the community to check if our results are as widely applicable as we believe them to be. Moreover, while we have argued that absence of barren plateaus is linked to the presence of relevant polynomially small subspaces, we do not close the door to the existence of other non-exponential concentration mechanisms. For instance, Claim~\ref{claim1} states \textit{for all cases studied} that, if $\mathcal{C}  \subset \NBP$ then $\mathcal{C} \subset \CC_{{\rm polySub}}$. We do \textit{not} claim that  $\NBP \subset  \CC_{{\rm polySub}}$. Thus we are \textit{not} claiming that \textit{any} non-concentrated loss can always be classically simulated.

In fact, one can construct examples of non-concentrated loss functions that are not classically simulable (see the Supplemental Information for one such example based on cryptographic hardness). Crucially, these examples do not resemble current mainstream variational quantum algorithms but instead draw inspiration from conventional fault-tolerant quantum algorithms for which we expect a superpolynomial quantum advantage to be achievable. In doing so, they break our initial assumption that comparing objects living in exponentially large spaces generically leads to concentrated expectation values, as the circuits used therein are not generic but rather purposely constructed. In fact, textbook quantum algorithms do not suffer from the curse of dimensionality as the exponentially large quantum states are manipulated in a well-thought and orderly manner, rather than a variational one. Nonetheless, as evidenced by our rather contrived examples in the Supplemental Information, it is not obvious how to use these techniques in variational quantum computing as textbook quantum algorithms are only built for very specific tasks. 

Another caveat is that the amount of quantum resources required by problems that fall into $\QESIM$ varies on a case-by-case basis. Many cases, as shown in Table~\ref{tab:sim-algs}, require only Pauli measurements and thus the load on the quantum device is very light. Indeed, a universal quantum computer may not be required and rather a more bespoke analogue simulator or quantum experiment may suffice. At the other extreme one could consider a deep quantum circuit with only a single trainable parameter $\phi$~\cite{gil-fuster2024understanding}. Such a circuit could be constructed so that it does not have a barren plateau and yet also fall into $\QESIM$, as one can trivially construct a classical surrogate for the landscape by running the circuit in `a data collection phase' at different $\phi$ values. The absence of barren plateau results for small \textit{patches} of quantum landscapes close to minima~\cite{haug2021optimal, puig2024variational, mhiri2025unifying} which can also be surrogated by running the parametrized quantum circuit and making appropriate measurements~\cite{lerch2024efficient}. These cases thus strictly align with our claims that problems that can provably avoid barren plateaus fall into $\QESIM$ but stretch the original spirit of $\QESIM$ by requiring substantially more quantum resources from the quantum device. 

It is worth noting that our general arguments against conventional variational quantum algorithms require provable absence of barren plateaus, as the proof itself is used in our derivation of the classical simulability techniques. As such, we cannot comment on situations where one can heuristically find large gradients (e.g., via numerics~\cite{grimsley2022adapt,dborin2022matrix,rudolph2022synergy}) but where the relevant subspace is not known. Such situations could arise for very special parts of the landscape, such as warm-starts or other smart initialization strategies. Although we believe that a careful analysis of many such cases will highlight that the problem is essentially contained in a polynomially-sized subspace. For instance, we can refer to the specific case of ADAPT-VQE~\cite{grimsley2022adapt}, which has no formal proof of absence of barren plateaus but has been heuristically shown to have large gradients for close to identity initializations. Here, it has been recently shown that Majorana Propagation~\cite{miller2025simulation} could efficiently simulate the action of the ADAPT-VQE circuit. Of course, these are preliminary results and there is as of yet no efficient protocol for characterizing the adjoint action of $U(\thv)$ throughout training and guaranteeing that it will remain in the simulable region. Indeed, as suggested by our counterexamples, it might be that the clever initializations explore an exponentially large space in some structured---but unknown---manner.

Relatedly, let us highlight that, for problems that live in effective but not proper subspaces, classical simulation is possible for $\thv\sim\mathcal{P}$ only with high probability. While this can enable for the simulation of randomly chosen points in certain regions of the landscape~\cite{lerch2024efficient,mhiri2025unifying}, this need not be necessarily useful for training, as one may be in a non-interesting part of the landscape, or the training might take us towards regions that not classically simulable. 
More concretely, if the subspace is effective (as in  Fig.~\ref{fig:modules}(b)),  our methods will provide simulation techniques that will only be faithful in the vicinity of the initialization. However, if the relevant sector $\BC_\lambda$ shifts during training the simulation might cease to be reliable. The extent to which this is possible is yet to be explored, but recent results have shown that this phenomenon does not seem to occur in quantum convolutional neural networks~\cite{bermejo2024quantum} when benchmarked in standard tests, as there one initializes --and remains during training-- in the same classically simulable region of the landscape.

Finally, we would like to again highlight that our analysis was oriented towards loss functions based on quantities such as that in Eq.~\eqref{eq:loss}. Hence, our claims are not directly applicable to problems such as quantum Boltzmann machines~\cite{coopmans2023sample} where one employs different types of loss functions. In fact, quantum Boltzmann machines are another example of a variational model built upon a quantum primitive (thermal state preparation) which one expects is hard to simulate classically. Hence this is an interesting avenue for balancing quantum advantage and trainability considerations.

\subsection{New opportunities}

Perhaps the main goal of this perspective article is to provide new and exciting research directions that take into account the connection between  barren plateaus and simulability. As such, we here discuss several new opportunities that we think could be fruitful to pursue.

First and foremost, we would like to note that just because some loss function can be classically simulated in polynomial time, this does not mean that it is still practical to do so. In particular, some of the simulation algorithms we have found (see the Supplemental Information) could still be prohibitively expensive to implement. Thus, by embracing the fact that some barren plateau-free loss functions are classically simulable, one could compare the computational complexity of the simulation versus that of estimating the loss function in a quantum computer, and thus potentially find provable polynomial speed-ups or at least more favorable constant factors with the same polynomial scaling. We refer the reader to Ref.~\cite{anschuetz2022interpretable, miao2023convergence, anschuetz2024arbitrary} for some examples on this research direction.

\begin{figure*}[t]
    \centering
\includegraphics[width=1\linewidth]{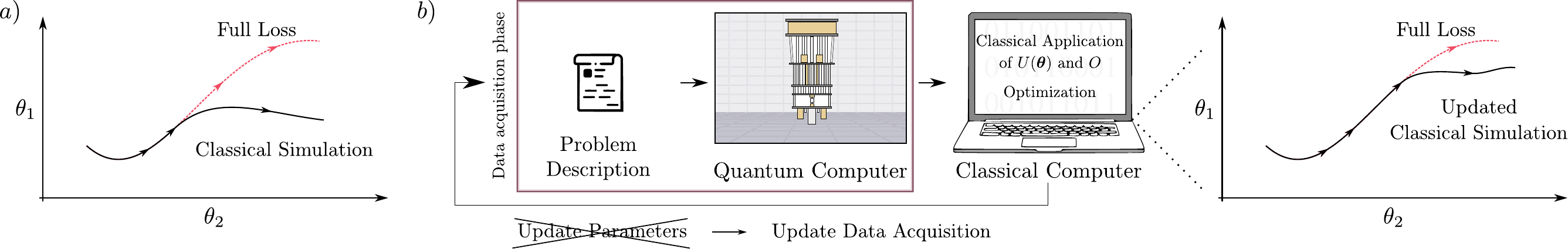}
    \caption{\textbf{Absence of barren plateaus, simulability, and faithful training.} a) We consider a problem in $\NBP$ such that, with high probability for some $\thv\sim\PC$, the circuit's adjoint action leads to a polynomial effective subspace as in Fig.~\ref{fig:modules}(b). Training on the classically estimated loss can be faithful for the first few optimization steps. However, as the optimization continues, a vital contribution to the loss could arise from operators not in $\BC_\lambda$. If this occurs, training on the classically estimated loss can be unfaithful and one can converge towards a parameter region that does not correspond to the minima of the true `full' loss function. b) To mitigate the aforementioned issues, one could use an alternative form of hybrid variational quantum computing where multiple, iterative, data acquisition steps are used. This information is then used by a classical computer to update a classical simulation of the loss and make it more faithful as the optimization progresses. As schematically shown, this scheme could make training the loss on a classical computer more faithful.        }
    \label{fig:modelsVQA2.0}
\end{figure*}

Second, we highlight the fact that being able to classically simulate and train a model without the need for expensive computational resources could be extremely useful. In particular, it is well known that there exist problems where one can classically train a variational state by minimizing some expectation value, but sampling from such a state on a classical computer is prohibitively expensive (such as in certain optimization problems or generative modeling tasks)~\cite{medvidovic2021classical,diez2023quantum,hadfield2018quantum,sreedhar2022quantum,krovi2022average, farhi2016quantum} or may not be supported by the particular simulation algorithm. Similarly, one can envision tasks where the goal is to variationally prepare some state for a quantum sensing task~\cite{endo2020variational,beckey2020variational,huerta2022inference}. Here, the ultimate goal is to obtain a metrological advantage rather than a computational one, meaning that classical simulations could be beneficial to save precious quantum resources. This paves the way for finding tasks where one ultimately cares about being able to prepare or sample from a state, as here one can train the parameters on a classical computer, transfer them to a quantum device (e.g., via Refs.~\cite{ran2020encoding,rudolph2022decomposition}), and perform additional operations and measurements therein.  

Third, and as discussed in the Supplemental Information, there appears to be no `one simulation algorithm' to rule them all, as we have found that different loss functions require different simulation algorithms. Hence, given some barren plateau-free model, we envision that asking the question `\textit{How can we simulate this loss?}' 
will lead us to new data-driven and quantum-inspired simulation algorithms. In the same line of thought, we think that our quest to better understand what makes an initial state well-aligned with the relevant subspaces, and thus a barren plateau-free loss, can allow us to uncover what are the true resources that make a loss function more difficult to simulate, but also more concentrated. For instance, the recent work of~\cite{tindall2024confinement} studied why certain quantum experiments~\cite{kim2023evidence} could be efficiently simulated via tensor networks~\cite{tindall2023efficient}. Therein, the authors showed that the dynamics generated by the quantum circuit ultimately live in a small (identifiable) subspace. In parallel, as most classical algorithms will have their limits where they particularly excel or perform poorly, it may also turn out that quantum methods where one runs the circuit on the device could still provide the most flexible `catch all' simulation method.  

As an example, we refer the reader to Ref.~\cite{diaz2023showcasing}, where a connection between computational resources in fermionic linear optic circuits, absence of barren plateaus, and simulability is unraveled. Therein, it was shown that when the quantum resources (measured by fermionic entanglement) are increased, the loss function becomes more concentrated. 
Simultaneously, even if the state has a large component in the polynomial subspace one still needs to estimate it, and whether such computation can be classically performed will likely determine if the loss is in $\CSIM$ or $\QESIM$ (see Claim~\ref{claim2}).  We thus leave for future work a detailed study of the initial states and measurement operators for which the problems are actually in $\CSIM$. Note that this analysis could promote simulations that are available on average, to faithful simulations for all $\thv$. For instance, a quantum convolutional neural network with product initial states can be efficiently simulated via tensor networks~\cite{cong2019quantum}, and hence is actually in $\CSIM$ $\forall \thv$. We believe that such pursuits will lead to deep insights into architecture-dependent resource theories and the limits of classical simulation algorithms.

Going further, we have noted that problems which are in $\QESIM$ but not in $\CSIM$ could be precisely those that are more amenable for implementation in the near term. This is due to the fact that the data acquisition might require shallower circuits than the implementation of $U(\thv)$, and thus be less impacted by hardware noise. Moreover, optimizing directly on the classical device will generally be much more efficient given the relative speed of classical devices and the possibility of using tools such as automatic differentiation through gradient back-propagation. 
Indeed, while the loss is defined as a unitary evolution of the state's projection onto the polynomial subspace, once the data is stored classically, one can potentially analyze it with more general procedures such as sending it through an appropriate classical neural network. This then begs the question of whether the best way to analyze the tomographic data is via a unitary map, or some more general function.

More fundamentally, the very fact that a study of barren plateaus has led us to  tomographic techniques~\cite{huang2020predicting,elben2022randomized,anshu2023survey} is in itself extremely interesting. Here, one can wonder about the limits of measure-first algorithms where initial data acquisition phases are allowed. While it is known that generically using the same shadows protocol for every problem has its limitations~\cite{gyurik2023limitations}, our results indicate that for every barren plateau-free problem there is an associated tomographic procedure, and that these protocols can vary widely from one task to the other. The extent to which these data acquisition methods are connected, and what their limitations are, is therefore an open question.

Regarding the utility of variational quantum computing and machine learning for classical data~\cite{wiebe2014quantumdeep,schuld2015introduction,biamonte2017quantum,cerezo2022challenges, di2023quantum}, our results should help us better understand if the quantum computer is being used as a meaningful form of enhanced feature-space. Namely,  if the problem's loss function is barren plateau-free, and we can identify how to simulate it, one should ask whether the loss is in $\CSIM$ or in $\QESIM$. Answering these questions could rule out situations where the use of a quantum computer is not necessary. 
One could further explore the parallels here with recent research into the possibility of classical surrogates for quantum models using classical data~\cite{schuld2021quantummodels,schreiber2022classical,landman2022classically,jerbi2023power,sweke2023potential}.

As discussed in the previous section, we know that if the problem has effective subspaces, then one could potentially simulate and train the loss given some initial set of parameters, but that the loss might not be faithful for the full optimization procedure (see Fig.~\ref{fig:modelsVQA2.0}(a)). To mitigate this issue, we envision a novel form of hybrid quantum computation as shown in  Fig.~\ref{fig:modelsVQA2.0}(b). Here, there is still a feedback loop between the classical and quantum device (similar to that of standard variational methods), but  instead of updating parameters in a quantum circuit, one updates the data acquired from the initial state. As such, as the optimization progresses, one uses the quantum computer to take new measurements on the initial state,  and then uses this information for a more faithful, updated classical simulation.  While we have not tested the performance of this protocol, we believe it might be useful and encourage the community to try it out.

To finish, we note that many schemes are not covered by our results. These include certain clever initialization strategies and many quantum learning models (such as some schemes in generative modeling). Moreover, we have shown that by drawing inspiration from conventional quantum algorithms one can construct unorthodox variational settings where  manipulation of exponentially large objects is enabled,  barren plateaus can be avoided,  and a quantum advantage is still possible.
In this regard, we note that under widely held cryptographic assumptions it is believed that BQP $\nsubseteq$ BPP/poly. Ref.~\cite{gyurik2023exponential} uses this fact in a PAC-learning setting to argue that many-body physics problems that are BQP-complete could be used to construct learning problems that are not classically learnable. There are many differences between their setting and ours and so whether or not similar arguments could be applied here requires careful work. Nonetheless, this cautiously hints towards the exciting possibility that there could still be useful situations where the implementation of parametrized circuits on quantum hardware can be used to achieve an exponential advantage. 

\section{Conclusions}

The arrival of variational quantum algorithms effectively democratized the world of quantum computing. Whereas coming up with new conventional quantum algorithms requires a careful consideration of how best to manipulate and extract quantum information, proposing a new variational quantum algorithm is relatively straightforward. One can simply identify a potential loss function and a circuit ansatz to use. The hard work of minimizing that cost is offloaded to a classical optimizer, reducing the burden both on the quantum computer and on the quantum researcher, or so it was hoped. 

While the hopes for this approach were largely fueled by the overwhelming success of neural networks in classical machine learning, it did not take long for the community  to realize that variational quantum computing can be significantly more challenging than its classical counterpart. In fact, as any practitioner can testify, actually performing the  optimization for moderate-sized problems is frustrating. This opened the door to studying causes of untrainability, leading to the discovery of several sources of barren plateaus and how to avoid them. While this quest was in many ways fruitful, with numerous approaches identified, here we have argued that all of these approaches are at heart rather simple. More concretely, we have argued that the strategies for avoiding barren plateaus considered by the community so far lead to algorithms that effectively live in polynomially sized subspaces. From here, one can  do away with parameterized quantum circuits and instead simulate the algorithms classically (potentially after an initial data acquisition stage on quantum hardware).

Our argument here at times borders on trivial, and may be---in some form, or another---already known to many. However, the connection between the absence of barren plateaus and simulability still has not permeated the field's Zeitgeist, and it is not uncommon to find claims where the absence of barren plateaus is equated with practical usefulness.   
In this manner, our conclusions push against current practice in the community and the net result is that variational quantum algorithms need a rethink. 

\textit{So, where does this leave us?} One path forward is to embrace our case-by-case argument as something positive. While some classical simulations have been performed~\cite{bermejo2024quantum,angrisani2024classically,lerch2024efficient,anschuetz2024arbitrary,mele2024noise,shin2024dequantising,ermakov2024unified,miller2025simulation}, much work remains to be done in this field of ``quantum machine learning dequantization''. Moreover, the fact that a quantum computer is generally not required to implement the parameterized quantum circuit, but might be needed for an initial data acquisition stage, makes such algorithms much easier to implement. To take advantage of this opportunity will require research into how to best perform the data acquisition stage and how to best perform the classical simulation needed to optimize the loss. Moreover, careful analytic work will be required to understand the scaling of the classical simulation algorithms. It may well turn out that simulating the parametrized quantum circuit on quantum hardware enjoys polynomial advantages over our best classical algorithms. 

Alternatively, one could rally against our arguments here and strive harder to identify potential avenues for achieving a provable exponential quantum advantage with parameterized quantum circuits. Of course, there are numerous classically hard quantum circuits and so it's natural to ask: \textit{Can't this be used to show that there are variational quantum algorithms that are barren plateau-free but not classically simulable even with data from quantum experiments?} Indeed, we showed that this is possible for a few contrived examples. The key question is whether the non-simulability of quantum circuits can be used more generally to construct useful, trainable, and non-classically simulable variational quantum algorithms. To achieve this we should seek to draw inspiration from conventional fault-tolerant quantum algorithms, and carefully consider how best to manipulate information in the exponentially large spaces in which quantum operators live. Variational quantum computing may still be a viable line of research in this regard but it will require a principled approach and a healthy dose of imagination. 

Finally, our focus here has been on analytically studying the scaling of variational quantum algorithms because we simply do not currently have the hardware to study how these algorithms perform for problem sizes we are actually interested in. However, there might be cases where, although we cannot prove that an algorithm does not have a barren plateau, training is possible. Indeed, this is the case for classical machine learning, whose heuristic success goes well beyond what can be guaranteed analytically. As such, there could be architectures which we do not know how to simulate, and where we cannot prove they are barren plateau-free, and yet seem to work in practice. 

We hope this perspective encourages the community to pause, reflect and take a more principled approach to variational quantum computing.

\section{Acknowledgments}

We are extremely grateful to Hsin-Yuan Huang for his invaluable contributions to this work. 
We thank Andrew Sornborger, Lukasz Cincio, Nathan Wiebe,  Chae-Yeun Park, Nathan Killoran, Maria Schuld,    Xanadu's Toronto office staff, and the QTML 2023 community for thoughtful and insightful conversations.  
M.C. acknowledges support from Los Alamos National Laboratory (LANL) ASC Beyond Moore’s Law project.
M.L. was supported by the Center for Nonlinear Studies at LANL. 
M.C., D.G.M., N.L.D. and P.B.  were supported by Laboratory Directed Research and Development (LDRD) program of LANL under project numbers 20230527ECR and 20230049DR.
Also, N.L.D. acknowledges support from CONICET  Argentina,
and P.B acknowledges support of DIPC.
A.I. acknowledges support by the U.S. Department of Energy (DOE) through a quantum computing program sponsored by the LANL Information Science \& Technology Institute and by the U.S. DOE, Office of Science, Office of Advanced Scientific Computing Research, under Computational Partnerships program.
E.F. acknowledges the support of the UK department for Business, Energy and Industrial Strategy through the National Quantum Technologies Programme, and the support of an industrial CASE (iCASE) studentship, funded by the Engineering and Physical Sciences Research Council (grant EP/T517665/1), in collaboration with the University of Strathclyde, the National Physical Laboratory, and Quantinuum. 
E.R.A.\ acknowledges support from the Walter Burke Institute for Theoretical Physics at Caltech.
S.T. and Z.H. acknowledge support from the Sandoz Family Foundation-Monique de Meuron
program for Academic Promotion. ST further acknowledges the grants for development of new faculty staff, Ratchadaphiseksomphot Fund, Chulalongkorn University [grant number 3230120336 DNS\_68\_052\_2300\_012], as well as funding from National Research Council of Thailand
(NRCT) [grant number N42A680126]. 
M.C., Z.H., and E.R.A. thank the organizers of the PennyLane Research Retreat, where part of this work was undertaken, for their hospitality.
This material is based upon work supported by the U.S. Department of Energy, Office of Science, National Quantum Information Science Research Centers, Quantum Science Center (LC).
This work was also supported by the Quantum Science Center (QSC), a National Quantum Information Science Research Center of the U.S. DOE.

    \section*{AUTHOR CONTRIBUTIONS}
The project was conceived by MC, ML, and ZH. Theoretical results were derived by MC, NLD, and ERA. All authors contributed 
to the manuscript review process. 

\section*{DATA AVAILABILITY}
No data was generated.

\section*{COMPETING INTERESTS}
The authors declare no competing interests.

\clearpage

\clearpage
\newpage

\makeatother
\onecolumngrid
\renewcommand\appendixname{Supp. Info.}
\renewcommand{\figurename}{Supplementary Figure}
\appendix

\section*{Supplemental Information for ``Does provable absence of barren plateaus imply classical simulability?}

\section{Barren plateau-free models, polynomial subspaces, and simulation algorithms}

We now present an analysis for each architecture and technique in Tables 1 and 2. In all cases we begin by defining the problem class of interest, reviewing the proof of absence of barren plateaus (i.e., determining the measurement operators and initial states leading to non-exponential concentration), and using this information to find the polynomially-large subspace. We then present tomographic techniques that can allow us to obtain the projection of the initial state onto the relevant subspace, and some appropriate simulation algorithms that can be used to classically compute the loss. In particular, here we show that that loss functions in  Tables 1 and 2 are in $\QESIM$.

\subsection{Small dynamical Lie algebras}\label{apx:smalldla}

Since several examples in Tables 1 and 2 fall inside of this category, we find it convenient to first present the abstract case of circuits with small dynamical Lie algebras.

We begin by (very briefly) reviewing some preliminary concepts that will be useful throughout the rest of the Appendices. We note that the goal of this perspective article is not to delve on mathematical details. As such, we will simply define the \textit{dynamical Lie algebra} of a circuit, and also what an \textit{equivariant} operator is. We refer the reader to the cited articles for further details. 

First, let us assume that the parametrized quantum circuit takes the form 
\begin{equation}\label{ap-eq:basicansatz}
    U(\thv)=\prod_{l=1}^Le^{-i \theta_l H_l}\,,
\end{equation}
where $\thv=(\theta_1,\theta_2,\ldots)$ are the trainable parameters, and $H_l$ Hermitian operators taken from some set of generators $\GC$. We note that Eq.~\eqref{ap-eq:basicansatz} can also capture an ansatz with fixed, un-parametrized, layers by setting some of the parameters to be constant. It is known that one can measure the potential expressivity of a circuit by analyzing its dynamical Lie algebra~\cite{zeier2011symmetry,larocca2021diagnosing,wiersema2023classification,kokcu2024classification,aguilar2024full,kazi2024analyzing}
\begin{equation} \label{ap-eq:dla}
    \mathfrak{g}=\langle i\GC\rangle_{{\rm Lie}}\,,
\end{equation}
where $\langle \cdot \rangle_{{\rm Lie}}$ denotes the Lie closure. As such, $\mathfrak{g}\subseteq\mathfrak{u}(2^n)$ is the real vector space spanned by the nested commutators of the elements in $i\GC$. Here $\mathfrak{u}(2^n)$ denotes the unitary algebra on the Hilbert space of $n$ qubits. The importance of $\mathfrak{g}$ stems from the fact that any unitary produced by the circuit (for any number of layers $L$, and for any set of parameters $\thv$) is contained in the subgroup generated by $\mathfrak{g}$, i.e., $U(\thv)\in e^{\mathfrak{g}}$. 

Recently, it was realized that too expressive circuits (where $\mathfrak{g}$ is exponentially large) are prone to  exhibit barren plateaus~\cite{holmes2021connecting,larocca2021diagnosing,fontana2023theadjoint,ragone2023unified}. As such, a significant amount of effort has been dedicated to understanding how the expressivity of the circuit could be reduced. One of the most promising approaches is to create circuits that respect a given symmetry group $S$~\cite{larocca2022group,meyer2022exploiting,skolik2022equivariant,nguyen2022atheory,ragone2022representation,zheng2021speeding,east2023all}. This led to the concept of \textit{equivariant}  circuits, where all the circuit's generators commute with the all the elements in $S$. More generally, we say that any operator is equivariant if it  commutes with  all the elements in $S$.

With the previous, definitions in mind, we denote by $\CC_{\poly\!\mathfrak{g}}$ the class of all instances where the circuit's dynamical Lie algebra  $\mathfrak{g}$ is simple and of polynomial dimension, i.e., $\dim(\mathfrak{g})\in\OC(\poly(n))$. We take  $\rho$ to be an $n$-qubit state preparable by a circuit with $\OC(\poly(n))$ gates when acting on the all-zero state, and $O$ some Hermitian operator. All the gates in the circuit are parametrized, and every parameter is sampled uniformly at random. We further assume that either $\rho$ or $O$ are in $i\mathfrak{g}$.

Recently,  an exact formula for deep versions of the circuits in  $\CC_{\poly\!\mathfrak{g}}$ was simultaneously obtained in Refs.~\cite{fontana2023theadjoint,ragone2023unified}. In particular, these works show that the variance is inversely proportional to $\dim(\mathfrak{g})$, as conjectured in Ref.~\cite{larocca2021diagnosing}. Moreover, if the projection of $\rho$ and $O$ in $i\mathfrak{g}$ is at most polynomially vanishing with the number of qubits, then it is guaranteed that a barren plateau will not appear. That is,
\begin{equation}
    \CC_{\poly\!\mathfrak{g}} \subset \NBP\,, \text{if the projection of $\rho$ or $O$ in $i\mathfrak{g}$ is in $\Omega(1/\poly(n))$}\,.
\end{equation}

Notably, the relevant subspace in this case is the DLA itself. This follows from the fact a Lie algebra is closed under the adjoint action of its associated Lie group, i.e., if $g\in\mathfrak{g}$, then  $U g U^\dagger \in \mathfrak{g}$ for all  $U \in e^{\mathfrak{g}}$. Consequently, if $O$ (or $\rho$) is in $i\liea$, the adjoint action of the group on it will result in an operator that lives inside the polynomially sized subspace. Hence,
\begin{equation}
    \CC_{\poly\!\liea}\subset \CC_{{\rm polySub}}\,.
\end{equation}

Next, one needs to obtain the projection of $\rho$ and $O$ into the dynamical Lie algebra (denoted as $\rho_{\mathfrak{g}}$ and $O_{\mathfrak{g}}$, respectively), which requires computing the inner product of the initial state and the measurement operator with the elements of  a basis of $i\mathfrak{g}$. Such computation will generally be case dependent, but or all cases considered we have always found an efficient tomographic procedure (see below for more information). From here, we can employ a recently introduced classical simulation algorithm called $\mathfrak{g}$-sim~\cite{somma2006efficient,goh2023lie}, which allows us to  (back-)propagate the elements of the dynamical Lie algebra with unitaries of the Lie group.  For this algorithm to work, one needs to compute the structure constants of the algebra. Importantly,  computing these constants takes a time that scales polynomially in $\dim(\liea)$. Thus, this suffices to show that 
\begin{equation}
    \CC_{\poly\!\liea} \subset  \QESIM\,.
\end{equation}
We remark here that even if obtaining the structure constants takes polynomial time, such  computational cost may still be prohibitively large~\cite{anschuetz2022efficient}. It is then possible that polynomial speed-ups could be obtained by running the quantum circuit in the quantum computer instead.

Here we find it important to note that we have assumed above that the Lie algebra is simple. However,  $\mathfrak{g}$ will generally decompose as a direct sum of simple Lie algebras (plus an abelian component) as $\mathfrak{g}=\oplus_\eta \mathfrak{g}_\eta$. Here, barren plateaus can be avoided (and thus simulations are enabled) if at least one of the simple components is polynomially large, and $\rho$ and $O$ have large projections therein. 

As we will see below, the following examples from Tables 1 and 2 actually avoid barren plateaus by having a small Lie algebra: $U(1)$-equivariant; and  $S_n$-equivariant.

\subsubsection{$U(1)$-equivariant circuit}

First, let us note that by $U(1)$-equivariant circuits, we  mean circuits that preserve the Hamming weight of a computational basis state. Hamming weight-preserving circuits have seen use for quantum chemistry, condensed matter, nearest centroid classification, portfolio optimization and hedging~\cite{arrazola2022universal,johri2021nearest,lopez2022symmetric,cherrat2023quantum}.

We define as  $\mathcal{C}_{{\rm U(1)equiv}}$ the class of all instances where the circuit is $U(1)$-equivariant, and is composed of Hamming weight preserving local gates. We take  $\rho$ to be an $n$-qubit pure state with fixed Hamming weight $k$ preparable by a circuit with $\OC(\poly(n))$ gates when acting on the all-zero state, and $O$ some $U(1)$-equivariant operator. All the gates in the circuit are parametrized, and every parameter is sampled uniformly at random.

Several proofs of absence of barren plateaus for $U(1)$-equivariant have been presented in the literature~\cite{larocca2021diagnosing,monbroussou2023trainability,cherrat2023quantum}. Crucially, they all somewhat rely on the fact that the restriction of the dynamical Lie algebra into the subspace with $k$ Hamming weight can be, at most, of dimension $\OC(\binom{n}{k}^2)$, meaning that a necessary condition for the loss function to not exhibit barren plateaus is that the initial state's Hamming weight is not too large. Note that bitstrings with Hamming weight $n-k$ can trivially be mapped to bitstrings of Hamming weight $k$ by flipping all bits, and, while our following statements equally apply to Hamming weight $n-k$, we will focus on the case of Hamming weight $k\leq n/2$. As such, one finds  
\begin{equation}
    \mathcal{C}_{{\rm U(1)equiv}}\subset \NBP\,,\,\, \text{if the Hamming weight $k$ of $\rho$ is such that $k\in\OC(1)$}.
\end{equation}

The previous description allows us to readily identify the polynomially small subspace. Namely, since the initial state has fixed Hamming weight $k$ and since the unitary and measurement preserve such Hamming weight, then all of the dynamics occur in the Hilbert's subspace of states with Hamming weight $k$. Moreover, we know that the dimension of such subspace is $\binom{n}{k}$, which is in $\OC(\poly(n))$ provided that $k\in\OC(1)$. Hence, $\forall \thv$ one has
\begin{equation}
    \mathcal{C}_{{\rm U(1)equiv}}\subset \CC_{{\rm polySub}}\,.
\end{equation}

From here, we can perform tomography of $\rho$ and $O$ in a basis of  Hamming weight $k$ states to obtain an efficiently storable description in the polynomial-sized subspace. Moreover, while we can use $\mathfrak{g}$-sim to simulate the loss functions in $\mathcal{C}_{{\rm U(1)equiv}}$, here we can also obtain the action of $U(\thv)$ in the subspace by studying how each local gate acts on a computational basis state, a procedure that is highly efficient (see Ref.~\cite{kerenidis2021classical} for an example with Givens rotations). These ingredients then suffice to show that 
\begin{equation}
    \mathcal{C}_{{\rm U(1)equiv}}\subset \QESIM\,.
\end{equation}

\subsubsection{$S_n$-equivariant}

Many problems in physics, machine learning, and quantum metrology have properties that remain invariant under any possible permutation of the $n$ qubits in the system. As such, studying these problems variationally will require the use of parametrized quantum circuits that are equivariant with respect to $S_n$, the symmetric group acting on $n$ qubits via qubit permutation~\cite{larocca2022group}.

We define as  $\mathcal{C}_{{\rm Sn\,equiv}}$ the class of all instances where the circuit is composed of $S_n$-equivariant layers of one and two-qubit gates. We take  $\rho$ to be an $n$-qubit state preparable by a circuit with $\OC(\poly(n))$ gates when acting on the all-zero state, and $O$ some $S_n$-equivariant operator. All the gates in the circuit are parametrized, and every parameter is sampled uniformly at random. 

In Ref.~\cite{kazi2024analyzing} the authors studied the dynamical Lie algebra associated to the unitaries used in $\mathcal{C}_{{\rm Sn\,equiv}}$ and determined that $\dim(\mathfrak{g})\in \OC(n^3)$. In particular, the absence of barren plateaus for this problem class was studied and proved in Ref.~\cite{schatzki2022theoretical}, where, as expected, the authors found that
\begin{equation}
    \mathcal{C}_{{\rm Sn\,equiv}} \subset \NBP\,, \text{if the projection of $\rho$ in $i\mathfrak{g}$ is in $\Omega(1/\poly(n))$}\,.
\end{equation}
Since the measurement operator is in the algebra, and since the unitary cannot take $O$ outside of the algebra, we find that the subspace is proper for all $\thv$ and thus
\begin{equation}
   \mathcal{C}_{{\rm Sn\,equiv}}\subset \CC_{{\rm polySub}}\,.
\end{equation}

As discussed above, the simulation can be performed via $\mathfrak{g}$-sim provided that we can compute the components of the initial state and measurement operator in the Lie algebra. Notably, this information can be obtained via the tomographic procedure of permutation invariant shadows introduced in Ref.~\cite{toth2010permutationally}, indicating that 
\begin{equation}
    \mathcal{C}_{{\rm Sn\,equiv}}\subset \QESIM\,.
\end{equation}

\subsection{Matchgate circuit and small modules}\label{apx:matchgate}

Matchgate circuits, also known as fermionic linear optics circuits, are circuits whose gates can be mapped to free fermion evolution \cite{jozsa2008matchgates}.
Their parameterized version ~\cite{wan2022matchgate,de2013power,oszmaniec2022fermion,cherrat2023quantum, diaz2023parallel} 
has received considerable attention in variational quantum computing as a test bed for benchmarking different ideas~\cite{larocca2021theory}. 
Crucially, the study of quantum resources and simulability of matchgate circuits ~\cite{bravyi2004lagrangian,dias2023classical,cudby2023gaussian,gigena2015entanglement} 
has been recently related to the theory of barren plateaus \cite{diaz2023showcasing}.

We denote by $\CC_{{\rm match}}$ the class of all instances where the circuit is a parametrized matchgate circuit composed of parametrized rotations about the $z$ axis and parametrized entangling $XX$ gates on neighboring qubits. We take  $\rho$ to be an $n$-qubit state preparable by a circuit with $\OC(\poly(n))$ gates when acting on the all-zero state, and $O$ some Pauli operator. All of the gates in the circuit are parametrized, and every parameter is sampled uniformly at random.

The presence and absence of barren plateaus in parametrized matchgate circuits was originally empirically studied in Refs.~\cite{wiersema2020exploring,larocca2021diagnosing}, with a recent full characterization presented in Ref.~\cite{diaz2023showcasing}. In particular, it is well known that the dimension of these circuits' dynamical Lie algebra scales as $\OC(n^2)$. This already means that, if $O$ or $\rho$ are in $i\mathfrak{g}$, we know that barren plateaus can be avoided, as the algebra itself is the relevant polynomially small subspace. However, the underlying algebraic properties of this model allow us to precisely characterize the loss function variance for \textit{any} initial state and measurement operator. Here, the key realization is that under the adjoint action of the circuit's unitary, operator space $\BC$ decomposes into group-modules as $\BC=\bigoplus_{\eta=1}^{2n}\BC_\eta$, where the operators in $\BC_\eta$ can be exactly characterized as the product of $\eta$ Majorana operators, and thus is of dimension $\binom{2n}{\eta}$. Hence, if $\eta\in\OC(1)$ for $\eta\leq n$, or if $2n-\eta\in\OC(1)$ for $\eta> n$, one obtains $\dim(\BC_\eta)\in\OC(\poly(n))$. In what follows, we will assume for simplicity $\eta\leq n$. Note that these subspaces are proper for all $\thv$. Assuming that $O$ is in some $\BC_\eta$ we get 
\begin{equation}
    \mathcal{C}_{{\rm match}}\subset \NBP\,,\,\, \text{if the projection of $\rho$ in $\BC_\eta\in\Omega(1/\poly(n))$ and $\eta\in\OC(1)$}.
\end{equation}

Here, identifying the small subspace is straightforward as the group modules are, by definition, closed under the adjoint action of the group. This allows us to say that if $\eta\in\OC(1)$, then 
\begin{equation}
   \mathcal{C}_{{\rm match}}\subset \CC_{{\rm polySub}}\,.
\end{equation}
Next, we need to obtain the component of $\rho$ into the module $\BC_\eta$ (note that since $O$ is a Pauli in $\BC_\eta$, we already have all the information we need from the measurement operator). Here, one simply needs to prepare $\rho$ and measure the expectation value of all the Pauli operators that can be expressed as a product of $\eta$ distinct Majoranas. Such a procedure is efficient if $\eta\in\OC(1)$. Then, we can obtain the adjoint action of the unitary in each module by computing the commutator graphs~\cite{diaz2023showcasing} (or generalized structure constants) and using $\mathfrak{g}$-sim. The classical cost here scales polynomially with the dimension of the module, and is therefore polynomial in the number of qubits. Hence, we have
\begin{equation}
    \mathcal{C}_{{\rm match}}\subset \QESIM\,.
\end{equation}

\subsection{Shallow hardware efficient ansatz}\label{apx:hea}

Since the shallow hardware efficient ansatz was presented in the main text, we only recap it briefly here for completeness.  \\

First, we recall that as shown in Ref.~\cite{cerezo2020cost}, and also at the end of this Supplemental Information, one can prove that
\begin{equation}
    \CC_{{\rm shallowHEA}}\subset \NBP\,,\,\, \text{if $\rho$ follows area law and $O$ is local}.
\end{equation}
Moreover, since the  measurement operator is local, the bounded light cone structure of a shallow hardware efficient ansatz means that the adjoint action of the circuit can only map the measurement operator to Paulis acting on at most $2L\in\OC(\log(n))$  qubits. There are $\OC(\poly(n))$ such operators and so the resulting subspace is a proper polynomial-sized subspace for any $\thv$. Hence
\begin{equation}
     \CC_{{\rm shallowHEA}}\subset\CC_{{\rm polySub}}\,.
\end{equation}

To perform an efficient classical simulation  of the loss we can obtain the adjoint action of the circuit in the subspace by  `dropping' from $U(\thv)$ all the gates that are outside of the measurement's backwards light cone. Then, the final step is to determine $\rho_\lambda$.  If $\rho$ has a known classical representation, then one can classically find its projection onto the subspace by computing (e.g., via pen and paper or using classical computing power) the expectation value for all operators $P_j\in\BC_\lambda$. However, if $\rho$ is some state obtained at the output of a give circuit, these expectation values cannot generally be obtained classically. However, with access to a quantum computer one can efficiently estimate $\rho_\lambda$ using local tomography of the reduced states. Notably instead of performing full tomography, one can also perform standard Pauli classical shadow tomography~\cite{huang2020predicting,elben2022randomized} That is, we prepare $\rho$, and implement randomly selected single-qubit Clifford gates on each qubit, and measure them in the computational basis. Since we care about estimating the expectation value of polynomially many local operators, this tomographic procedure is efficient.  As such, we find 
\begin{equation}
     \CC_{{\rm shallowHEA}}\subset \QESIM\,.
\end{equation}
It follows that no problem based on shallow hardware efficient ans\"{a}tze with a local measurement can be outside of $\QESIM$. We note that this result was originally reported in Ref.~\cite{basheer2022alternating}, with similar ideas explored concurrently in the context of quantum process learning in Ref.~\cite{jerbi2023power}.

\subsubsection{Quantum generative modeling}\label{apx:generative}

Quantum generative modeling is a sub-field of quantum machine learning where the goal is to learn an unknown probability distribution underlying a classical dataset, with the hope that the trained model can be utilized to generate new data samples that closely match the original data~\cite{benedetti2019generative,coyle2020born,alcazar2020classical,benedetti2019adversarial,perdomo2018opportunities,zoufal2021generative}. While several schemes exist for quantum generative modeling, we will here focus on quantum circuit generative models, which include quantum circuit Born machines (QCBMs) and quantum generative adversarial networks (QGANs). In these models, one prepares a parametrized quantum state that encodes the desired distribution over computational basis measurements. It has been shown that these models can avoid barren plateaus by using losses that contain significant local contributions, which, in conjunction with a shallow hardware efficient ansatz, guarantees non-exponential concentration~\cite{rudolph2023trainability,letcher2023tight}.

We define as $\CC_{{\rm gen}}$ the class of all instances where the circuit is a shallow one-dimensional hardware efficient ansatz composed of $L\in\OC(\log(n))$ layers of two-qubit gates acting on neighboring-qubits in a brick-layered structure. Moreover, we take  $\rho$ to be the all-zero state, and $O$ to be a weighted sum of projectors onto computational basis states. Here, we further assume that all the gates in the circuit are parametrized, and every parameter is sampled uniformly at random.

The loss functions in $\CC_{{\rm gen}}$ can be simulated using the fact that the whole circuit can be efficiently represented via tensor networks~\cite{markov2008simulating}. In particular, the all-zero state, when evolved by a shallow one-dimensional hardware efficient ansatz up to $\OC(\log(n))$ depth, admits an efficient matrix product state~\cite{verstraete2008matrix} description (a proper subspace) for all $\thv$. That is,
\begin{equation}
   \CC_{{\rm gen}}\subset \CC_{{\rm polySub}}\,.
\end{equation}
Moreover, since there is no need for tomographic measurements, and since sampling from the output distribution of an MPS is efficient~\cite{ferris2012perfect,stoudenmire2010minimally}, one has 
\begin{equation}
    \CC_{{\rm gen}}\subset \CSIM\,.
\end{equation}

\subsection{Generic shallow locally circuits}

Here, by generic shallow locally circuits we refer to circuits where local two-qubit gates act according to some arbitrary topology, and where, there is a path that connects the input state with the measurement operator that crosses, at most, $\OC(\log(n))$ such gates. This architecture encompasses hardware efficient ansatz in one-dimension, quantum convolutional neural networks, but also more general circuits where gates can act according to a generic topology (i.e, 2D and beyond). Importantly, the fact that the gate connectivity is arbitrary means that light-cone arguments such as ones previously presented for the shallow hardware efficient ansatz do not hold. 

Let us then define as $\mathcal{C}_{{\rm slocal}}$ the class of all instances where the parameterized quantum circuit is built  from local gates acting on pairs of qubits according to some arbitrary topology.  We then take  $\rho$ to be an $n$-qubit state preparable by a circuit with $\OC(\poly(n))$ gates when acting on the all-zero state, and the measurement operator $O$ to be a local Pauli acting on a qubit at the circuit's output. We assume that the parameters are sampled such that each gate forms an independent $2$-design over $SU(4)$.

The proof of absence of barren plateaus for these architectures can be derived with the results in~\cite{napp2022quantifying, zhang2023absence}, but for completeness, we also present a proof at the end of this Supplemental Information (Appendix~\ref{app:arbtopabsenceofbp}). As shown therein, if the initial state is pure and satisfies an area law of entanglement, we find
\begin{equation}
    \mathcal{C}_{{\rm slocal}}\subset \NBP\,,\,\, \text{if $\rho$ follows an area law of entanglement}.
\end{equation}

Moreover, the scrambling nature of the circuit effectively suppresses the contribution of high weight Paulis and the dynamics occur in a subspace composed of Paulis with $\OC(1)$-bodyness qubits, so that 
\begin{equation}
   \mathcal{C}_{{\rm slocal}}\subset \CC_{{\rm polySub}}\,.
\end{equation}
For a rigorous proof see Ref.~\cite{angrisani2024classically}.
Then, the component of the initial state therein can be obtained via classical shadows, and the simulation follows as a direct application of the algorithm in~\cite{angrisani2024classically}, showing that 
\begin{equation}
    \mathcal{C}_{{\rm slocal}}\subset \QESIM\,.
\end{equation}

\subsection{Small angle initialization}\label{apx:smallangle}

A recurring theme in this and the following sections will be how to classically simulate the loss function values for effective polynomial subspaces for problem classes in $\QESIM$ without relying on strong symmetries as in Secs.~\ref{apx:smalldla} \& \ref{apx:matchgate} or small entanglement light cones such as in Sec.~\ref{apx:hea}. 
Broadly speaking, one can either \textit{forward} propagate the initial state $\rho$ (or measurements of it) through the circuit, or \textit{backward} propagate the measurement operator $O$ in order to estimate the loss $\ell_{\bm\theta}(\rho, O)$. 

If no tomography of the initial state is needed, i.e., if an efficient MPS or MPS representation is known, the approach of forward propagation can be especially efficient in practice. For quantum circuits with local entangling gates and at most $\OC(\log(n))$ layers, the bond dimension of the tensor network will not grow to be exponentially large~\cite{markov2008simulating}, but even beyond log-depth circuits, high quality approximation with polynomial bond dimension may be possible. If a polynomial number of measurements of a non-trivial initial state are given, e.g., using standard Pauli classical shadows~\cite{huang2020predicting}, we notice that we can represent each measurement as an efficient matrix product operator (MPO)~\cite{verstraete2006matrix,evenbly2011tensor} with bond dimension $\chi=1$.
The loss function $\ell_{\bm\theta}(\rho, O)$ then becomes an average over all measurements from the quantum computer and their corresponding forward propagated MPOs.

On the flip side, one can also backward propagate the target operator through the quantum circuit, which may be particularly efficient in practice if $O$ is relatively simple in some basis (e.g., the Pauli basis) and one has to collect measurements from a  non-trivial initial state $\rho$. This can again be done using tensor networks in the form of MPOs or using methods, such as Pauli Propagation~\cite{fontana2023classical,rudolph2023classical}, which work using the Pauli transfer matrix formalism~\cite{nemkov2023fourier,beguvsic2023simulating,beguvsic2023fast}.

\medskip

The above techniques could be used for the classical simulation of both small angle initializations (discussed in this section) and QCNNs (discussed in the next section). Small angle initializations have been an increasingly popular strategy aimed at replacing the most commonly used random angle initialization techniques. Here we will be focusing on two of them: Hamiltonian variational ansatz with small angles and Gaussian initialization in deep variational quantum circuits.

\subsubsection{Hamiltonian variational ansatz}

The Hamiltonian variational ansatz was proposed in Ref.~\cite{wecker2015progress} as a problem-inspired circuit to find the ground state of a given target Hamiltonian $H$. Here, one splits  $H$ into a sum of terms $H=\sum_{i=1}^M \alpha_i h_i$, and builds the circuit as $U(\thv)=\prod_{l=1}^L\prod_{i=1}^Me^{-i\theta_{l,i}H_i}$, leading to a circuit that somewhat resembles a Trotterized evolution. 

Let us define as $\mathcal{C}_{{\rm HVA}}$ the class of all instances where the parameterized quantum circuit is built is a Hamiltonian variational ansatz for a given $H$, which can be expressed as $H=\sum_{i=1}^M \alpha_i h_i$. We assume that each $h_i$ is a $k$-local operator. We then take  $\rho$ to be an $n$-qubit state preparable by a circuit with $\OC(\poly(n))$ gates when acting on the all-zero state and the observable to be local, and the measurement operator $O$ to be a local Pauli. We assume that the parameters are sampled in a region around zero whose width scales as $\OC(1/(nML))$.

The proof of absence of barren plateaus is given in Ref.~\cite{park2023hamiltonian}, where such parametrized quantum circuit is shown to be equivalent to a time evolution driven by a single $k$-local Hamiltonian. In this new framework, the loss can be shown to avoid barren plateaus if the initial state is pure and satisfies an area law of entanglement. Hence, we have
\begin{equation}
    \mathcal{C}_{{\rm HVA}}\subset \NBP\,,\,\, \text{if $\rho$ follows an area law of entanglement}.
\end{equation}
Given that  $U(\thv)$ implements a short time evolution, the adjoint action of $U(\thv)$ on $O$ lives, with high probability for $\thv \sim \mathcal{P}$ , in an effective polynomially-sized subspace of operators acting on $\OC(1)$ qubits (see the proof of Theorem~4 in Ref~\cite{lerch2024efficient} for a detailed proof). Thus, one has
\begin{equation}
    \mathcal{C}_{{\rm HVA}}\subset  \CC_{{\rm polySub}}\
.
\end{equation}
Since the operators in the small subspace are all acting on at most $\OC(1)$ qubits, we can utilize standard Pauli classical shadows on the initial state~\cite{huang2020predicting}. Then, since the parametrized quantum circuit can effectively be expressed as $U(\thv)\simeq e^{-i K \tau}$, where $K$ is a $k$-local Hamiltonian ($k\in\OC(1)$)  and $\tau\in\OC(1/n)$, we know that the entanglement generated by the circuit is bounded, and we can efficiently compute the loss function via tensor networks or Pauli Propagation methods (with time and sample complexity guarantees provided by Theorems 2 and 4 in Ref.~\cite{lerch2024efficient}). Hence,
\begin{equation}
\mathcal{C}_{{\rm HVA}}\subset \QESIM\,.
\end{equation}

\subsubsection{Gaussian initialization in deep variational quantum circuits}

Next, let us define as $\mathcal{C}_{{\rm GI}}$ the class of all instances where the parametrized quantum circuit is built employing $L$ layers of single qubit parametrized rotations followed by an entangling layer of controlled $Z$ gates. We take  $\rho$ to be an $n$-qubit state preparable by a circuit with $\OC(\poly(n))$ gates when acting on the all-zero state and the observable to be local, and the measurement operator $O$ to be a local Pauli. We assume that the parameters are sampled according to a Gaussian distribution centered around zero and with variance in $\OC(1/L)$.

The proof of absence of barren plateaus for this initialization technique was presented in Ref.~\cite{zhang2022escaping}, where the authors find that exponential concentration is avoided if $\rho$ satisfies an area law of entanglement.  Therefore, one has

\begin{equation}
\mathcal{C}_{{\rm GI}}\subset \NBP\,,\,\, \text{if $\rho$ follows an area law of entanglement}\,.
\end{equation}
When the parameters are sampled from a Gaussian distribution whose width decreases with the circuit depth, then the adjoint action of the unitary on the measurements operator  lives, with high probability for $\thv\sim\PC$, in a low magic subspace.  Concretely, as proven in Ref.~\cite{lerch2024efficient}, it belongs to an effective subspace composed of operators acting on $\OC(\text{poly}(n))$ qubits that can be efficiently classically identified using the small-angle Pauli Propagation algorithm. It follows that 
\begin{equation}
    \mathcal{C}_{{\rm GI}}\subset  \CC_{{\rm polySub}}\
.
\end{equation}

This trade-off between the range of parameter values and circuit depth renders the quantum circuit effectively simulated with low magic simulation algorithms~\cite{lerch2024efficient, beguvsic2024fast}. Again we might need to perform Pauli measurements on the initial state $\rho$. Putting these results together leads to 
\begin{equation}
\mathcal{C}_{{\rm GI}}\subset \QESIM\,.
\end{equation}

\medskip

We note that Ref.~\cite{wang2023trainability} also proves a trainability guarantee for small angle initialization in an ansatz composed of alternating controlled Z layers and single qubit rotations. One can show that this is classically simulable by a similar argument to the one above.

\subsection{Quantum convolutional neural networks}

Quantum convolutional neural networks (QCNNs) are a type of circuit architecture introduced in Ref.~\cite{cong2019quantum} which has seen widespread use in supervised classification problems on both classical and quantum data~\cite{cong2019quantum,caro2021generalization,liu2023model,hur2021quantum,umeano2023can}. A QCNN is composed of parametrized gates acting on nearest neighbors which are interleaved with pooling layers where qubits are either measured or traced-out. Due to the way this architecture is constructed, its depth always scales logarithmically with the number of qubits, and the measurement operator is always local. 

\subsubsection{Standard QCNN with tracing-out pooling layers}

Let us define as  $\mathcal{C}_{{\rm QCNN}}$ the class of all instances where the circuit is a QCNN composed of layers of two-qubit gates acting on nearest neighbors which are interleaved with pooling layers where half of the qubits are traced-out. We take  $\rho$ to be an $n$-qubit state preparable by a circuit with $\OC(\poly(n))$ gates when acting on the all-zero state, and $O$ some Pauli operator acting on the last two qubits of the QCNN. All of the two-qubit gates in the circuit are parametrized, and every parameter is sampled uniformly at random. We assume that randomly sampling the parameters leads to all of the gates forming independent local $2$-designs. 

The proof of absence of barren plateaus for QCNNs can be found in Ref.~\cite{pesah2020absence}, but can also be derived with the tools presented at the end of this Supplemental Information (Appendix~\ref{app:arbtopabsenceofbp}). Therein, the authors showed that  if all of the two-qubit gates in the circuit form independent local $2$-designs, then  the loss function will not exhibit a barren plateau provided that the reduced states of $\rho$ onto any pair of qubits are not exponentially close to the identity (e.g., if the initial state follows an area law of entanglement). Hence, we can find that
\begin{equation}
    \mathcal{C}_{{\rm QCNN}}\subset \NBP\,,\,\, \text{if $\rho$ follows an area law of entanglement}.\nonumber
\end{equation}

With the previous, let us try to identify what are the relevant polynomial-sized subspaces under the QCNN's adjoint action. First, note that there always exists a set of parameters $\thv_{j}$ such that the QCNN is able to transform the local Pauli operator $O$ into \textit{any} other Pauli $P_j$. That is,
\begin{equation}
  \forall P_j\,, \text{then }\exists \thv_j\,\, \text{such that} \,\,  \frac{1}{d}\inprod{U\ad(\thv_j)OU(\thv_j),P_j}=1\,.\quad 
\end{equation}
While this equation indicates that the adjoint action allows us to reach exponentially many Pauli operators, this does not mean that the adjoint action will typically lead to exponentially large subspaces. Indeed, one can verify that, with high probability for $\thv \sim \mathcal{P}$, $U\ad(\thv)OU(\thv)$ will live in an effective polynomially large subspace whose basis is composed of Pauli of operators acting on at most $\OC(1)$ qubits~\cite{bermejo2024quantum,angrisani2024classically}. Thus, with high probability for some $\thv\sim \PC$, the QCNNs adjoint action over the measurement leads to an effective polynomially-sized subspace, and so
\begin{equation}
    \mathcal{C}_{{\rm QCNN}}\subset \CC_{{\rm polySub}}\,.
\end{equation}

Given that the small subspace contains operators acting on a small number of qubits, we can again perform standard Pauli classical shadows on the initial state~\cite{huang2020predicting} to obtain a classical description.  Then, since  the circuit is composed of local gates, and contains at most $\OC(\log(n))$ layers, we can simulate the loss via tensor networks or any of the techniques outlined at the beginning of Sec.~\ref{apx:smallangle}. High-fidelity approximations of $O(\thv)$ likely require polynomial MPO bond dimension, and algorithms such as Pauli Propagation could employ a simple truncation based on operator weight which has for example been used in Refs.~\cite{beguvsic2023fast,rudolph2023classical}.  Thus, we find 
\begin{equation}
    \mathcal{C}_{{\rm QCNN}}\subset \QESIM\,.
\end{equation}

Indeed, we note that the recent work of~\cite{bermejo2024quantum} has explicitly implemented the previous classical simulation method, effectively dequantizing the quantum convolutional neural network architecture.

\subsubsection{QCNN with post-selected pooling layers}

Next we consider a  class of QCNNs where the loss function depends on the outcomes of the measurements in the pooling layers; we denote this class of QCNNs as $\mathcal{C}_{{\rm pl-QCNN}}$. This setting is typically studied in the context of error-correction, where one wants to learn a circuit that optimizes recovery given some noise channel and logical state~\cite{cong2019quantum}. Here, the circuit is a QCNN composed of layers of two-qubit gates acting on nearest neighbors, interleaved with pooling layers where half of the qubits are measured and used to control single-qubit unitaries. While simulation is only available with high probability for the case of $\mathcal{C}_{{\rm QCNN}}$, we claim one can simulate those in $\mathcal{C}_{{\rm pl-QCNN}}$ for all parameter values; that is,
\begin{equation}\label{eq:pool_qcnn_subset}
    \mathcal{C}_{{\rm pl-QCNN}}\subset \QESIM\,.
\end{equation}

We assume there are $p$ parameterized gates, each a $2$-qubit unitary; we denote the unitary implemented by the QCNN as $U_{\bm{\theta}}$, where $\bm{\theta}\in\mathbb{R}^p$. In the post-selected setting the loss is some function of the expectation values of observables of the form:
\begin{equation}\label{eq:o_def_post_qcnn}
O=\sum_{\bm{m}\in\mathcal{M}\subseteq\left\{0,1\right\}^{n-k}}O_{m_1,\ldots,m_{n-k}}\otimes\ket{\bm{m}}\bra{\bm{m}}
\end{equation}
for some subset $\mathcal{M}\subseteq\left\{0,1\right\}^{n-k}$ and $k$-local $O_{m_1,\ldots,m_{n-k}}$ (assumed to have bounded Hilbert--Schmidt norm). A typical example of this setting is when a QCNN is used to learn the decoder of a quantum error correcting code where, given a circuit $V$ encoding a logical $k$-qubit state $\rho_L$, one maximizes the probability of correcting $t$ errors after an error channel $\mathcal{N}$ is applied~\cite{cong2019quantum}:
\begin{equation}
f\left(\bm{\theta}\right)=\sum_{\substack{\bm{m}\in\left\{0,1\right\}^n:\\\left\lVert\bm{m}\right\rVert_1\leq t}}\Tr\left(\rho_L\otimes\ket{\bm{m}}\bra{\bm{m}}U_{\bm{\theta}}\mathcal{N}\left(V\left(\rho_L\otimes\ket{0}\bra{0}^{n-k}\right)V^\dagger \right)U_{\bm{\theta}}^\dagger\right).
\end{equation}
Here, the input to the QCNN is the $n$-qubit state $\rho:=\mathcal{N}\left(V\left(\rho_L\otimes\ket{0}\bra{0}^{n-k}\right)V^\dagger\right)$, $O_{m_1,\ldots,m_{n-k}}=\rho_L$, and (for constant $t$) $\left\lvert\mathcal{M}\right\rvert=\OC\left(n^t\right)$.

While QCNNs are defined via real parameters---i.e., their parameters are of infinite precision---by a net argument~\cite{caro2021generalization, jerbi2023power} we can assume that each of the $p$ parameterized gates is drawn from a finite set $S_{p,\epsilon}$ of $2$-qubit gates up to incurring a total error $\epsilon$ in diamond distance. More concretely, Theorem~1 of the Supplemental Information of Ref.~\cite{caro2021generalization} states that such a net of cardinality:
\begin{equation}
S_{p,\epsilon}=\OC\left(\poly\left(\frac{p}{\epsilon}\right)\right)
\end{equation}
exists. We thus assume this setting without loss of generality.

We now show Eq.~\eqref{eq:pool_qcnn_subset} via the following theorem.
\begin{theorem}[Classical simulability of QCNNs]
Fix $\epsilon,\delta=\operatorname{\Theta}\left(1\right)$. Consider a QCNN with $p$ parameterized $2$-qubit gates drawn from $S_{p,\epsilon}$, and consider:
    \begin{equation}\label{eq:qcnn_loss}
        \Tr\left(O U\rho U^\dagger\right),
    \end{equation}
where $O$ is as in Eq.~\eqref{eq:o_def_post_qcnn}, $U$ is a unitary implementing the QCNN parameterized by $p$ elements of $S_{p,\epsilon}$, and $\rho$ is an $n$-qubit state. Using
    \begin{equation}
        N=\OC\left(\frac{\left\lvert\mathcal{M}\right\rvert^2}{\epsilon^2}\left(p\log\left(\frac{p}{\epsilon}\right)+\log\left(\left\lvert\mathcal{M}\right\rvert\right)+\log\left(\delta^{-1}\right)\right)\right)
    \end{equation}
    copies of $\rho$, with probability at least $1-\delta$, one can construct a classical shadow representation of $\rho$ that allows one to estimate Eq.~\eqref{eq:qcnn_loss} to additive error $\epsilon$ over all parameter settings. This classical shadow representation can be constructed with no prior knowledge of $U$ or $O$.
\end{theorem}
\begin{proof}
    Consider a classical shadows description of $\rho$ using the random Clifford basis protocol of Ref.~\cite{huang2020predicting}, built using
    \begin{equation}
        N=\frac{300\left\lvert\mathcal{M}\right\rvert^2\left\lVert O_{\text{max}}\right\rVert_{\text{F}}^2}{\epsilon^2}\left(p\ln\left(S_{p,\epsilon}\right)+\ln\left(\left\lvert\mathcal{M}\right\rvert\right)+\log\left(2\delta^{-1}\right)\right)
    \end{equation}
    copies of $\rho$. Here, $\left\lVert O_{\text{max}}\right\rVert_{\text{F}}$ is the maximal Hilbert--Schmidt norm of the $O_{m_1,\ldots,m_{n-k}}$, assumed to be $\OC\left(1\right)$. By Ref.~\cite{huang2020predicting}, this classical shadows description of $\rho$ is of size $\poly\left(N\right)$. Theorem S1 combined with Proposition S1 of \cite{huang2020predicting} guarantees that, with probability at least $1-\delta$, this classical shadows description suffices to estimate the expectation values of
    \begin{equation}
        M=S_{p,\epsilon}^p\left\lvert\mathcal{M}\right\rvert
    \end{equation}
    observables with Hilbert--Schmidt norm upper-bounded by $\left\lVert O_{\text{max}}\right\rVert_{\text{F}}$ each to additive error $\frac{\epsilon}{\left\lvert\mathcal{M}\right\rvert}$ in classical time $\poly\left(N\right)$. All that remains is demonstrating that there are $M$ observables being measured with Hilbert--Schmidt norm upper-bounded by $\left\lVert O_{\text{max}}\right\rVert_{\text{F}}$ by this QCNN.

    Now consider the set of $\left\lvert\mathcal{M}\right\rvert$observables:
    \begin{equation}
        \left\{O_{m_1,\ldots,m_{n-k}}\otimes\ket{\bm{m}}\bra{\bm{m}}\right\}_{\bm{m}\in\mathcal{M}}.
    \end{equation}
    The Hilbert--Schmidt norm of each $O_{m_1,\ldots,m_{n-k}}$ is upper-bounded by $\left\lVert O_{\text{max}}\right\rVert_{\text{F}}=\OC\left(1\right)$ by definition. By additionally counting all $S_{p,\epsilon}^p$ choices for $U$, there are thus $S_{p,\epsilon}^p\left\lvert\mathcal{M}\right\rvert$ operators of the form:
    \begin{equation}
        U^\dagger\left(O_{m_1,\ldots,m_{n-k}}\otimes\ket{\bm{m}}\bra{\bm{m}}\right)U
    \end{equation}
    with Hilbert--Schmidt norm upper-bounded by $\left\lVert O_{\text{max}}\right\rVert_{\text{F}}$. The considered classical shadows description of $\rho$ thus suffices to estimate the expectation value of each $U^\dagger\left(O_{m_1,\ldots,m_{n-k}}\otimes\ket{\bm{m}}\bra{\bm{m}}\right)U$ to additive error $\frac{\epsilon}{\left\lvert\mathcal{M}\right\rvert}$ with probability at least $1-\delta$, and thus also of each $U^\dagger OU$ to additive error $\epsilon$.
\end{proof}

\subsection{Non-unital noisy-circuits}

Let us define as  $\mathcal{C}_{{\rm non-u}}$ the class of all instances where the circuit is composed of two-qubit gates on a one-dimensional lattice, interleaved by local (single-qubit) noise, with a final layer of single-qubit gates. Furthermore, we assume that the noise is non-unital and contractive~\cite{mele2024noise}. We take $\rho$ to be a generic $n$-qubit state, and the observable $O$ to be local. We assume that all parameters are sampled such that each gate forms an independent $2$-design. 

As shown in~\cite{mele2024noise}, the effect of the non-unital noise is to effectively make the circuit shallow by ``resetting'' the qubits after each layer of parametrized gates. Indeed, as proved in Theorem 1 of ~\cite{mele2024noise}, one can only consider the last $\OC(\log(n))$ layers of the circuit, after which one can use light-cone arguments to prove that
\begin{equation}
    \mathcal{C}_{{\rm non-u}}\subset \NBP\,.
\end{equation}
Similarly, since the measurement is local, the same light-cone arguments can be used to show that the Heisenberg-evolved measurement operator will act on operators acting on at most $\OC(\log(n))$ qubits. That is, 
\begin{equation}
   \mathcal{C}_{{\rm non-u}}\subset \CC_{{\rm polySub}}\,.
\end{equation}
Finally, by either using the classical simulation algorithms of Ref.~\cite{mele2024noise} in the case of 1D circuits, or those in Refs.~\cite{angrisani2024classically, schuster2024polynomial} for arbitrary topologies,  one obtains
\begin{equation}
    \mathcal{C}_{{\rm slocal}}\subset \QESIM\,.
\end{equation}

\subsection{Parameterized Dynamic Circuit}

Parameterised dynamic circuits make crucial use of \textit{feedforward} operations. A feedforward operation consists of a measurement on a qubit followed by a conditional gate: If the measurement outcome is $0$ then $U_0 = I$ is applied, if instead it is $1$, then $U_1$ is applied where
\begin{equation}
U_1 =
\begin{pmatrix}
\cos\varphi e^{-i\phi} & -i\sin\varphi \\
-i\sin\varphi & \cos\varphi e^{i\phi}
\end{pmatrix} \, .
\end{equation}
Let us define $\mathcal{C}_{{\rm dyn}}$ as the class of all instances where the circuit implements a channel $\mathcal{E}(\thv) = \mathcal{U}_d(\thv_d) \circ \ldots \circ \mathcal{U}_1(\thv_1)$, where each channel $\mathcal{U}_j(\thv_j)$ is a completely positive, trace preserving map on $n$ qubits. The operations in each $\mathcal{U}_j$ can be written as a composition of single-qubit and two-qubit operations in an arbitrary circuit topology.
It is further assumed that: \begin{enumerate}
    \item Every component $\theta_i \in \thv$ parameterizes only a single operation in $\mathcal{E}$ (i.e., the parameters are independent and uncorrelated).
    \item $\mathcal{E}$ is locally scrambling, i.e., is invariant under single-qubit random gates from a unitary $2$-design after every gate.
    \item $\mathcal{E}$ has a constant worst-case feedforward distance. That is, we define $f$ be the shortest path from a qubit measurement to a feedforward operation through the backwards light cone of the measurement and take $f \in O(1)$.  
    \item $\mathcal{E}$ has an average entangling power of $\alpha$ and average swapping power $\beta \geq 0$~\cite{ware2023sharp}.
\end{enumerate}
Let $O$ be a $k$-local Hamiltonian with $k \in \log(n)$ and  let the initial state be arbitrary.  \\ 

Then, as shown in Ref.~\cite{deshpande2024dynamic}, similarly to the case of non-unital noise, the effect of the resets is to effectively make the circuit shallow by ``resetting'' the qubits after each layer of parametrized gates. However, in contrast to non-unital noise, these resets can be implemented in a controlled manner. 
Indeed, as proven in Theorem 1 of ~\cite{deshpande2024dynamic}, one finds that
\begin{equation}
    \mathcal{C}_{{\rm dyn}}\subset \NBP\,.
\end{equation}
Since these circuits are locally scrambling, the arguments of Ref.~\cite{angrisani2023learning} also apply here, thus providing bounds on the approximation error by considering only the reduced subspace. In particular, the mid circuit measurements at the core of dynamic circuits can be viewed as controlled two qubit rotations followed by qubit resets and thus can be absorbed into the setting considered in Ref.~\cite{angrisani2023learning}. Thus, we have 
\begin{equation}
    \mathcal{C}_{{\rm dyn}}\subset \CC_{{\rm polySub}}\,,
\end{equation}
and 
\begin{equation}
     \mathcal{C}_{{\rm dyn}} \subset \QESIM\,. 
\end{equation}

\section{Examples of non-concentrated  but also non-simulable loss functions}\label{apx:counterexample}

In this section we showcase an example of a non-concentrated loss function which is not classically simulable. This example showcases that $\NBP\neq \QESIM$. 

\medskip

Let us begin by assuming that the parameterized quantum circuit takes the form,
\begin{equation}\label{eq:circ-Bool}
    U(\thv) = U \left(\bigotimes_{i=1}^n e^{-i \theta_i X_i}\right),
\end{equation}
where $U$ is a fixed polynomial-size quantum circuit. More specifically, we take $U$ to be a quantum circuit that implements a Boolean function $B(x): \{0, 1\}^n \rightarrow \{-1, 1\}$ over the first output qubit that is classically hard to simulate for random input bitstrings, even with advice strings~\cite{gyurik2023exponential}. Many cryptographic problems related to discrete logarithms are widely believed to satisfy these criteria because they are random self-reproducible. We take any one of them.

Next, let us focus on  $U$ and the corresponding  Boolean function $B(x)$.
The circuit $U$ implements $B(x)$ in the following sense:
\begin{equation}\label{ap-eq:BooleanFuncCirc}
    \left\| \Tr[ U\ketbra{x}{x} U^\dagger Z_1] - B(x) \right\| < 0.01\,,
\end{equation}
for all $x \in \{0, 1\}^n$, where $Z_1$ is the Pauli-Z operator acting on the first qubit. As previously stated, we will focus on the fact that  $B(x)$ is classically hard to simulate for uniformly random input bitstrings $x \in \{0, 1\}^n$, even given a polynomial-size advice string. Equivalently, this can be stated as saying that this class of Boolean functions $\{B(x)\}$ is \emph{not} in $\mathrm{BPP}/\mathrm{poly}$. 
As data from quantum experiments are a restricted form of advice strings, we will use this observation to argue that a loss function that estimates $B(x)$ for $x$ uniformly sampled from $\{0, 1\}^n$ is not $\QESIM$.

\medskip

Specifically, let us consider the loss function
\begin{equation}\label{ap-eq:BooleanLoss}
    \ell_{\thv}(O) = \bra{0^n} \left(\bigotimes_{i=1}^n e^{i \theta_i X}\right) U^\dagger Z_1 U \left(\bigotimes_{i=1}^n e^{-i \theta_i X}\right) \ket{0^n} \, ,
\end{equation}
where $O=Z_1$ and we have explicitly taken the initial state $\rho$ to be the all-zero state.
Let us further suppose that each $\theta_i$ is uniformly sampled from the \textit{discrete} set $\{0, \pi / 2, \pi, 3 \pi / 2, 2 \pi\}$ such that the parametrized quantum circuit prepares a random computational basis state. That is, $\left(\bigotimes_{i=1}^n e^{-i \theta_i X}\right) \ket{0^n} = \ket{x}$ where $|x\rangle $ is a random bitstring. We call this distribution $\mathcal{P}_{\rm disc}$. We will then define $\CC_{{\rm Bool}}^{\rm disc.}$ as the class of all instances where the circuit is of the form in Eq.~\eqref{eq:circ-Bool} where $U$ implements a classically hard Boolean function (even with advice strings) and $\thv \sim \mathcal{P}_{\rm disc}$. 

Since this circuit by assumption approximates $B(x)$ for all $x$, as stated in Eq.~\eqref{ap-eq:BooleanFuncCirc}, and estimating $\{B(x)\}$ for uniformly sampled $x$ is not in $\mathrm{BPP}/\mathrm{poly}$, and data from quantum experiments are a form of advice string, it follows that $\CC_{{\rm Bool}}^{\rm disc.}$ is not in $\QESIM$. Moreover, since $U$ is a polynomial-size circuit, the problem is in $\QSIM$.  That is, we have 
\begin{equation}
    \CC_{{\rm Bool}}^{\rm disc.} \subset(\QSIM \cap \NQESIM) \,,
\end{equation}
where $\NQESIM$ denotes the classes not in $\QESIM$.

It remains to show that this problem does not exhibit a barren plateau. To show this we start by highlighting that $B(x)$ has a large variance over the input bitstrings. To see this note that if the variance is smaller than $1 / 100$, then for at least a $9/10$ fraction of the inputs, $B(x)$ must give the same output (the constants are not optimized here, this is just for illustrative purposes). As a result, the error of a trivial classical algorithm that always gives the same output, would only be wrong on at most $1/10$ of the inputs. This contradicts the assumption that $B(x)$ cannot, with high probability, be approximately computed classically. Since $\ell_{\thv}(O)$ approximates $B(x)$, as stated in Eq.~\eqref{ap-eq:BooleanFuncCirc}, it follows that $\ell_{\thv}(O)$ has a large variance over $\mathcal{P}_{\rm disc}$. That is, 
\begin{equation}
    \CC_{{\rm Bool}}^{\rm disc.} \subset\NBP\,.
\end{equation}
Thus, $\CC_{{\rm Bool}}^{\rm disc.} $ provides an example of a problem class that is in $\NBP$ but not in $\QESIM$, as claimed. 

\medskip

However, this example is rather contrived. For one, it considers an initialization over a discrete set of parameters. This is a long way from the spirit of more variational quantum algorithms. Moreover, this discrete initialization is doing much of the heavy lifting in this example. To see this one can also consider a variant of the above problem where the parameters are sampled uniformly in the range $[0, 2 \pi]$. We will call this distribution $\mathcal{P}_{\rm cont.}$ and the corresponding problem class $\CC_{{\rm Bool}}^{\rm cont.}$.

To understand both the absence of barren plateaus and classical simulability of $\CC_{{\rm Bool}}^{\rm cont.}$ it is helpful to write the backwards propagated observable in Eq.~\eqref{ap-eq:BooleanLoss}, $A \equiv U^\dagger Z_1 U$, in terms of its local and global components. Specifically, for a small number $k$, we can split $A$ as follows:
\begin{equation}
    A = \underbrace{\sum_{|P| \leq k} \alpha_P P}_{A^L} + \underbrace{\sum_{|P| > k} \alpha_P P}_{A^H}.
\end{equation}
We note that the expectation of $A^L$ in any quantum state can be measured efficiently on quantum hardware in an initial data acquisition phase. 

Next we highlight that for $\thv \sim \mathcal{P}_{\rm cont.}$, the contribution of $A^H$ to the loss decays exponentially with the bodyness $k$~\cite{huang2022learning}. (Intuitively, this can 
broadly be understood as following from the fact that the expectation of any Pauli operator $P$ of bodyness $k$ in a typical random product state decays exponentially in $k$.)
Hence, it follows that with high probability for $\thv \in \mathcal{P}_{\rm cont.}$ the problem lives in an effective polynomial subspace and we have 
\begin{equation}
    \CC_{{\rm Bool}}^{\rm cont.}\subset \mathcal{C}_{\rm polySub} \,.
\end{equation}
It then follows from Claim 2 in the main text that $\CC_{{\rm Bool}}^{\rm cont.}$ is classically simulable in the sense that with high probability for $\thv \sim \mathcal{P}_{\rm cont.}$ one can provide a classical estimate of the loss defined in Eq.~\eqref{ap-eq:BooleanLoss}. 

\medskip

\textit{So what do we conclude from these examples?} 
Firstly, this is a concrete example of the limitations of an average case notion of simulability. The loss we have constructed here is easy to classically simulate \textit{on average} over the continuous parameter range $\mathcal{P}_{\rm cont.}$ after an initial data acquisition phase. However, on a special measure-zero set of parameters, the problem becomes classically hard. 
Secondly, while in a typical random product state the expectation of any given global operator vanishes with high probability, something subtle can happen for the case of random computational basis states. Namely, while on average the contribution of a typical global Pauli is exponentially small, there are exponentially many such terms and with the right structure these can add up to give a non-zero total contribution. This is what happens for $\CC_{{\rm Bool}}^{\rm dist.}$---the contribution from $A^H$ does not vanish and gives rise to a non-vanishing variance. However, in parallel, the exponential number of global terms in  $A^H$ makes its contribution impossible to classically simulate. This suggests smart initialization procedures may be the key to constructing efficiently trainable variational quantum algorithms that are not also classically simulable.

\section{Absence of barren plateaus in shallow local circuits with arbitrary topology}\label{app:arbtopabsenceofbp}

In this section we will show that that shallow hardware efficient ansatz composed of two-qubit gates acting an arbitrary topology do not posses barren plateaus. While these results can be, in a way, derived from those in~\cite{napp2022quantifying, zhang2023absence}, we add the full proof here for completeness.  

We begin by present a few useful definitions. First, let  $\HC=(\mathbb{C}^2)^{\otimes 2}$ be the Hilbert space of two qubits. In what follows we will denote the subsystem of the first qubit with an index $A$, and that of the second qubit with an index $B$ so that $\HC=\HC_A\otimes\HC_B$. Then let $\BC(\HC^{\otimes t})$ be the set of  bounded operators acting on $t$ copies of the aforementioned Hilbert space. We are interested in computing expectation values of the form
\begin{equation}\label{eq:t-moments}
    \tau^{(t)}_{U}(X)=\int_{SU(4)} d\mu (U)\, U^{\otimes t} X (U\ad)^{\otimes t}\,,
\end{equation}
where $d\mu (U)$ denotes the uniquely defined, normalized, Haar measure over the fundamental representation of  $SU(4)$, and where $X\in\BC(\HC^{\otimes t})$.

We know that the twirls of Eq.~\eqref{eq:t-moments} can be evaluated via the Weingarten calculus~\cite{mele2023introduction}, as $\tau^{t}(X)_U$ projects into the commutant of the $t$-th fold tensor representation of $SU(4)$. Here, we can use the Schur-Weyl duality to find that a basis of the commutant is the subsystem-permuting representation of $S_t$, the symmetric group of order $t$. For instance, for $t=1$, a basis of $S_1$ is $\{\id\}$ where $\id$ denotes the $4\times 4 $ identity acting on $\HC$. Hence, one finds
\begin{equation}\label{eq:1-moment}
    \tau^{1}_{U}(X)=\int_{SU(4)} d\mu (U)\, U X U\ad=\frac{\Tr[X]}{4}\id\,.
\end{equation}
Going further to  $t=2$, it is standard to take a  basis of $S_2$ given by $\{\id\otimes \id,{\rm SWAP}\}$. Here, ${\rm SWAP}$ denotes the operator that permutes states acting on each of the copies of the Hilbert space and whose action is given by 
 \begin{equation}
    {\rm SWAP}(\ket{\psi_1}\otimes\ket{\psi_2})=\ket{\psi_2}\otimes\ket{\psi_1}\,,
\end{equation}
for $\ket{\psi_1},\ket{\psi_2}\in\HC$. This leads to the standard result 
\begin{equation}
    \tau^{2}_{U}(X)=\frac{1}{15}\left(\Tr[X]-\frac{\Tr[X{\rm SWAP}]}{4}\right)\id\otimes \id+\frac{1}{15}\left(\Tr[X{\rm SWAP}]-\frac{\Tr[X]}{4}\right){\rm SWAP}\,.\label{eq:2-moment-yuck}
\end{equation}

While the previous form can, and has~\cite{vznidarivc2022solvable,deneris2024exact,braccia2024computing,belkin2023approximate,mittal2023local}, been used to study properties of random quantum circuits composed of local gates, it has the key disadvantage that the basis $\{\id\otimes \id,{\rm SWAP}\}$ is not orthogonal. As such, we will determine a more convenient basis to work in. For this purpose, we first re-arrange the Hilbert spaces so that 
\begin{equation}
    \HC^{\otimes 2}=(\HC_{A,1}\otimes\HC_{B,1})\otimes(\HC_{A,2}\otimes\HC_{B,2})\rightarrow(\HC_{A,1}\otimes\HC_{A,2})\otimes(\HC_{B,1}\otimes\HC_{B,2})\,,
\end{equation}
so that 
\begin{equation}
  \id\otimes \id\rightarrow \id_A\otimes \id_B\,,\quad   {\rm SWAP}\rightarrow{\rm SWAP}_A\otimes {\rm SWAP}_B\,,
\end{equation}
where $SWAP_{A(B)}$ are the operators that swap the $A(B)$ qubits in the two-fold tensor product of $\HC$, and $\id_{A(B)}$ the identity operator on the two-copies of the subsystem's Hilbert space $\HC_{A(B)}$. Moreover, let us write 
\begin{align}
    {\rm SWAP}_{A}=\frac{1}{2}\Big(\id_{A,1}\otimes \id_{A,2}+X_{A,1}\otimes X_{A,2}+Y_{A,1}\otimes Y_{A,2}+Z_{A,1}\otimes Z_{A,2} \Big)=\frac{1}{2}\left(\id_{A}+{\rm SW}_A\right)\,,\nonumber
\end{align}
where we defined 
\begin{align}
\id_{A}=\id_{A,1}\otimes \id_{A,2}\,,\quad 
    {\rm SW}_A=X_{A,1}\otimes X_{A,2}+Y_{A,1}\otimes Y_{A,2}+Z_{A,1}\otimes Z_{A,2}\,.
\end{align}
Combining the previous results, we obtain
\begin{align}
    {\rm SWAP}=\frac{1}{4}\Big(\id_A\otimes\id_B+{\rm SW}_A\otimes\id_B+\id_A\otimes {\rm SW}_B+{\rm SW}_A\otimes {\rm SW}_B\Big)\,.
\end{align}
From here, we can find that an orthonormal basis for the commutant of the two-fold tensor rep of $SU(4)$ is
\begin{equation}
    \{\frac{\id_A\otimes\id_B}{4},\frac{{\rm SW}_A+ {\rm SW}_B+{\rm SW}_A\otimes {\rm SW}_B}{4\sqrt{15}}\}\nonumber\,,
\end{equation}
where we have dropped the trivial identity terms for ease of notation. Then, we can write the two-fold twirl as 
\begin{equation}
    \tau^{2}_{U}(X)=\frac{1}{16}\Tr[X]\id_A\otimes \id_B+\frac{1}{240}\Tr[X({\rm SW}_A+ {\rm SW}_B+{\rm SW}_A\otimes {\rm SW}_B)]({\rm SW}_A+ {\rm SW}_B+{\rm SW}_A\otimes {\rm SW}_B)\,.\label{eq:2-moment}
\end{equation}

Given Eq.~\eqref{eq:2-moment} we can now compute the following twirls: 
\begin{align}
\tau^{2}_{U}(P_A\otimes P_B )&=\frac{1}{15}({\rm SW}_A+ {\rm SW}_B+{\rm SW}_A\otimes {\rm SW}_B)\label{eq:2-twirl-Pauli}\,,\\
    \tau^{2}_{U}({\rm SW}_A )&=\frac{1}{5}({\rm SW}_A+ {\rm SW}_B+{\rm SW}_A\otimes {\rm SW}_B)\label{eq:2-twirl-A}\,,\\
    \tau^{2}_{U}({\rm SW}_B )&=\frac{1}{5}({\rm SW}_A+ {\rm SW}_B+{\rm SW}_A\otimes {\rm SW}_B)\label{eq:2-twirl-B}\,,\\
    \tau^{2}_{U}({\rm SW}_A\otimes{\rm SW}_B )&=\frac{3}{5}({\rm SW}_A+ {\rm SW}_B+{\rm SW}_A\otimes {\rm SW}_B)\label{eq:2-twirl-AB}\,,
\end{align}
where $P_{A(B)}$ is a single-qubit Pauli acting on the subsystems $A(B)$.

\begin{figure}[t]
    \centering
\includegraphics[width=.2\linewidth]{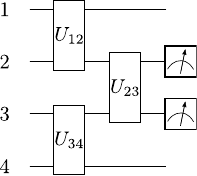}
    \caption{\textbf{Four qubit circuit example.} Consider the following circuit acting on four qubits. At the end of the circuit we measure a Pauli operator on qubits 2 and 3. }
    \label{fig:example}
\end{figure}

Using the previous equation, we can evaluate the loss function variance over arbitrary circuits composed of $SU(4)$ gates by simply keeping track of how the ${\rm SW}$ operators transform between themselves and how the coefficients accumulate. As an example, consider the four qubit circuit shown in Fig.~\ref{fig:example}, where the unitary $U$ implementing the whole circuit can be expressed as $U=U_{23}U_{34}U_{12}$. We assume the input to the circuit is $\rho$, and that at the end of the circuit we measure the expectation value of some Pauli operator acting on qubits 2 and 3. We further assume  that all the two-qubit gates form independent two-design over $SU(4)$. We want to compute a quantities  of the for $\mathbb{E}_U[\Tr[U\rho U\ad O]]$ and $\mathbb{E}_U[\Tr[U\rho U\ad O]^2]$. In particular we will evaluate the expectation value over the circuit as
\begin{align}
\mathbb{E}_U[f(U)]&=\int_{SU(4)}d\mu(U_{12})d\mu(U_{34})d\mu(U_{23})f(U)\,,
\end{align}
Using Eq.~\eqref{eq:1-moment}, we can readily show that  
\begin{equation}
    \mathbb{E}_U[\Tr[U\rho U\ad O]]=0\,.
\end{equation}
Then, since
\begin{align}
    \Tr[U\rho U\ad O]^2&=\Tr[U^{\otimes 2}\rho^{\otimes 2} (U\ad)^{\otimes 2} O]\nonumber\\
    &=\Tr[\rho^{\otimes 2} (U\ad)^{\otimes 2} OU^{\otimes 2}]\nonumber\,,
\end{align}
we find 
\begin{align}
    \mathbb{E}_U[\Tr[\rho^{\otimes 2} (U\ad)^{\otimes 2} OU^{\otimes 2}]]&=\Tr[\rho^{\otimes 2}\mathbb{E}_U[ (U\ad)^{\otimes 2} OU^{\otimes 2}]]\,,\nonumber
\end{align}
where we have used the linearity of the trace. Then, we see that we need to compute 
\begin{equation}
\tau_{U_{12}}^{(2)}\circ\tau_{U_{34}}^{(2)}\circ\tau_{U_{23}}^{(2)}[ O^{\otimes 2}]\,.
\end{equation}
Since  $O$ is some Pauli acting non trivially on qubits $2$ and $3$, we find via Eq.~\eqref{eq:2-twirl-Pauli}
\begin{align}
  \tau_{U_{23}}^{(2)}[ O^{\otimes 2}]  = \frac{1}{15}({\rm SW}_2 +{\rm SW}_3+{\rm SW}_2\otimes {\rm SW}_3)\,.\label{eq:test-1}
\end{align}
We then use Eq.~\eqref{eq:2-twirl-A}--\eqref{eq:2-twirl-AB} to obtain 
\begin{align}
    &\tau_{U_{34}}^{(2)}[\frac{1}{15}({\rm SW}_2 +{\rm SW}_3+{\rm SW}_2\otimes {\rm SW}_3)]\nonumber\\
    =&\frac{{\rm SW}_2}{15}+\frac{1}{75}({\rm SW}_3 +{\rm SW}_4+{\rm SW}_3\otimes {\rm SW}_4)+\frac{3}{75}({\rm SW}_2\otimes{\rm SW}_3 +{\rm SW}_2\otimes{\rm SW}_4+{\rm SW}_2\otimes{\rm SW}_3\otimes {\rm SW}_4)\label{eq:test-2}\,.
\end{align}
Here, we can already observe an important phenomenon. If a ${\rm SW}_{i}$ goes into a twirl  from a gate $U_{ij}$, then we obtain terms ${\rm SW}_{i}$, ${\rm SW}_{j}$, and ${\rm SW}_{i}\otimes {\rm SW}_{j}$ with a coefficient $\frac{1}{5}$.  Similarly, if ${\rm SW}_{i}\otimes {\rm SW}_{j}$ goes into a twirl  from a gate $U_{ij}$, then we obtain terms ${\rm SW}_{i}$, ${\rm SW}_{j}$, and ${\rm SW}_{i}\otimes {\rm SW}_{j}$ with a coefficient $\frac{3}{5}$. Hence, we simply need to keep track of indexes and coefficients.  As such, let us  we-write Eq.~\eqref{eq:test-1} as
\begin{equation}
    O^{\otimes 2}\xrightarrow[\tau_{U_{23}}^{(2)}]{} \left\{\left\{(2),\frac{1}{5}\right\},\left\{(3),\frac{1}{5}\right\},\left\{(2,3),\frac{1}{5}\right\}\right\}\,.
\end{equation}
Then, the next twirl over $U_{34}$ in Eq.~\eqref{eq:test-2} leads to 
\begin{align}
\xrightarrow[\tau_{U_{33}}^{(2)}]{} &\Big\{\left\{(2),\frac{1}{5}\right\},\left\{(3),\frac{1}{75}\right\},\left\{(4),\frac{1}{75}\right\},\left\{(3,4),\frac{1}{75}\right\},\nonumber\\
&\left\{(2,3),\frac{3}{75}\right\},\left\{(2,4),\frac{3}{75}\right\},\left\{(2,3,4),\frac{3}{75}\right\}\Big\}
\end{align}
and finally the next twirl over $U_{12}$
\begin{align}
\xrightarrow[\tau_{U_{33}}^{(2)}]{} &\Big\{\left\{(1),\frac{1}{75}\right\},\left\{(2),\frac{1}{75}\right\},\left\{(3),\frac{1}{75}\right\},\left\{(4),\frac{1}{75}\right\},\nonumber\\
&\left\{(1,2),\frac{1}{75}\right\},\left\{(3,4),\frac{1}{75}\right\},\left\{(2,3),\frac{3}{375}\right\},\nonumber\\
&\left\{(1,3),\frac{3}{375}\right\},\left\{(2,4),\frac{3}{375}\right\},\left\{(1,4),\frac{3}{375}\right\},\nonumber\\
&\left\{(1,2,3),\frac{3}{375}\right\},\left\{(1,2,4),\frac{3}{375}\right\},\left\{(2,3,4),\frac{3}{375}\right\},\nonumber\\
&\left\{(1,3,4),\frac{3}{375}\right\},\left\{(1,2,3,4),\frac{3}{375}\right\}\Big\}\,.
\end{align}
The previous example unveils a simple pattern that will be universally true. First, the ``most-local'' operators, are the ones which will be suppressed the least. Whereas ``more global'' operators will be  quickly suppressed. As such, we can intuitively see that, in average, the Heisenberg evolved operator will live on $\OC(1)$ bodied operators. Indeed, one can study how much the coefficient of a given operators at the out put of the circuit will look like by evaluating some weighted random walk over the circuit architecture following the rules mentioned above. 

Putting it all together, we are now ready to show that shallow local circuits avoid barren plateaus when measured locally at their output. Interestingly, this result encompassed, and constitutes a simpler proof, to those in Refs.~\cite{cerezo2020cost} and~\cite{pesah2020absence}. Consider therefore an arbitrary circuit with two-qubit gate topology
\begin{equation}
    \mathcal{A} = \{(i_k, j_k), \quad \forall k = 1, \ldots, K\}\,.
\end{equation}
Note that we are not assuming any dimensionality on $ \mathcal{A}$, meaning that it could be as general as desired. Then,  if the measurement operator is locally acting on the $\mu$-th qubit, and that this index appears in $\mathcal{A}$ $L$ times, one can lower bound the expectation value's variance as
\begin{align}
    \Var_{U}[\Tr[U\rho U\ad O]]&\geq \frac{1}{15}\frac{1}{5^{L-1}}\Tr[\rho^{\otimes 2}{\rm SW}_{\mu}]\nonumber\\
    &=\frac{1}{15}\frac{2}{5^{L-1}}(\Tr[\rho^{2}_\mu]-\frac{1}{2})\,,
\end{align}
where we have defined as $\rho^{2}_\mu$ the reduced state of $\rho$ on the $\mu$-th qubit, and where we have used the fact that ${\rm SW}_{\mu}=2({\rm SWAP}_\mu-\id)$. Moreover, here we have used the fact that we can drop all other terms in the path summation since the overlap between $\rho^{\otimes 2}$ and the tensor product of many ${\rm SW}$'s is always positive. More generally, if we define as $\SC$ as the set of indexes which are connected through a path of length $L$ from the beginning of the circuit to the measurement on the $\mu$-th qubit, one can tighten the bound as  
\begin{align}
    \Var_{U}[\Tr[U\rho U\ad O]]&\geq \frac{1}{15}\frac{2}{5^{L-1}}\sum_{j\in\SC}(\Tr[\rho^{2}_j]-\frac{1}{2})\,.
\end{align}
The previous result shows that if the circuit is shallow, i.e., if there exists a path of length $L$  such that $L\in\OC(\log(n))$; and if the input state $\rho$ is not too entangled, i.e.,  $|\Tr[\rho^{2}_j]-\frac{1}{2}|\in\Omega(1/\poly(n))$ then
\begin{align}
    \Var_{U}[\Tr[U\rho U\ad O]]\in\Omega\left(\frac{1}{\poly(n)}\right)\,.
\end{align}
Notably, we find that the reduced states will be exponentially close to the maximally mixed state if they satisfy a volume law of entanglement, whereas their distance to it will be polynomially small is they follow an area law of entanglement~\cite{leone2022practical}.

\end{document}